\newcommand{\pname}{\textsc}
\newcommand{\ProblemFormat}[1]{\pname{#1}}
\newcommand{\ProblemIndex}[1]{\index{problem!\ProblemFormat{#1}}}
\newcommand{\ProblemName}[1]{\ProblemFormat{#1}\ProblemIndex{#1}{}\xspace}
 \newcommand{\bmfgfr}{\ProblemName{Low GF(2)-Rank Approximation}}
\newcommand{\rpartitioncenterstar}{{\sc Binary Constrained Partition Center$^{\star}$}\xspace}
\newcommand{\rpartitioncenter}{{\sc Binary Constrained Partition Center}\xspace}
\newcommand{\rcenter}{{\sc Binary Constrained $k$-Center}\xspace}
\newcommand{\rclustering}{{\sc Binary Constrained Clustering}\xspace}
\newcommand{\GF}{{GF}(2)\xspace}
\newcommand{\rank}{{\rm rank}\xspace}
\newcommand{\GFrank}{{\rm{GF}}(2)\text{{\rm -rank}}\xspace}
\newcommand{\hdist}{d_H}
\newcommand{\Hdist}{{\operatorname{cost}}}
\newcommand{\opt}{{\sf OPT}}
 \newcommand{\LRAGF}{\textsc{$\ell_1$-Rank-$r$ Approximation over \GF}\xspace}
\newcommand{\LRAB}{\textsc{Boolean $\ell_1$-Rank-$r$ Approximation}\xspace}
\newcommand{\bfA}{\mathbf{A}} 
\newcommand{\bfB}{\mathbf{B}}
\newcommand{\bfU}{\mathbf{U}}
\newcommand{\bfV}{\mathbf{V}}  
\newcommand{\bfa}{\mathbf{a}} 
\newcommand{\bfc}{\mathbf{c}} 
\newcommand{\bfe}{\mathbf{e}}
\newcommand{\bfs}{\mathbf{s}}
\newcommand{\bfx}{\mathbf{x}}
\newcommand{\bfy}{\mathbf{y}}
\newcommand{\cR}{\mathcal{R}}
\newtheorem{definition}{Definition}
\newtheorem{theorem}{Theorem}
\newtheorem{claim}{Claim}
\newtheorem{corollary}{Corollary}
\newtheorem{lemma}{Lemma}
\newtheorem{proposition}[theorem]{Proposition}
\newcommand{\OO}{{\cal O}}
\newcommand{\RR}{{\cal R}}
\newcommand{\ar}[2]{r}
\newcommand{\defproblem}[3]{
  \vspace{3mm}
\noindent\fbox{
  \begin{minipage}{.95\textwidth}
  \begin{tabular*}{\textwidth}{@{\extracolsep{\fill}}lr} \textsc{#1} \\ \end{tabular*}
  {\bf{Input:}} #2  \\
  {\bf{Task:}} #3
  \end{minipage}
  }
  \vspace{2mm}
}
\newcommand{\expect}{\mathbf{E}}
\newcommand{\expectation}[3][0]{%
   \ifcase#1 
        \expect [ #2 \mid #3 ] 
      \or   \expect\bigl[ #2 \bigm\vert #3 \bigr]
      \or   \expect \Bigl[ #2 \Bigm\vert #3 \Bigr] 
      \or \expect\biggl[ #2 \biggm\vert #3 \biggr]
      \or  \expect \Biggl[#2 \Biggm\vert #3 \Biggr]
   \else 
       \expect\left[ #2  \;\middle\vert\; #3 \right]
   \fi}
\newcommand{\probability}[3][0]{%
   \ifcase#1 
    \Pr( #2 \mid #3 ) 
      \or \Pr \bigl( #2 \bigm\vert #3 \bigr)
      \or \Pr \Bigl( #2 \Bigm\vert #3 \Bigr) 
      \or \Pr \biggl( #2 \biggm\vert #3 \biggr)
      \or \Pr \Biggl( #2 \Biggm\vert #3 \Biggr) 
   \else 
      \Pr \left( #2  \;\middle\vert\; #3 \right)
   \fi}
\title{Low-rank binary matrix approximation in column-sum  norm
}
\author{
Fedor V. Fomin
\and
Petr A. Golovach
\and 
Fahad Panolan
\and 
Kirill Simonov
}
\date{Department of Informatics, University of Bergen, Norway\\ \texttt{\{fomin|petr.golovach|fahad.panolan|kirill.simonov\}@ii.uib.no}}
\begin{document}
\maketitle

\thispagestyle{empty}

\begin{abstract} 
We consider \LRAGF, where 
for     a binary $m\times n$ matrix $\bfA$ and a positive integer $r$, one seeks a binary  matrix $\bfB$ of rank at most $r$, minimizing 
the column-sum norm $\| \bfA -\bfB\|_1$.
 We show that for 
every $\varepsilon\in (0, 1)$, there is a {randomized} $(1+\varepsilon)$-approximation algorithm for     \LRAGF of running time $m^{\OO(1)}n^{\OO(2^{4r}\cdot \varepsilon^{-4})}$. 
This is the first  polynomial time approximation scheme (PTAS) for this problem. 


\end{abstract}

\setcounter{page}{1}

\section{Introduction}\label{sec:intro}
Low-rank matrix approximation is the method of compressing a matrix by reducing its dimension. It is the basic component of various  methods in data analysis including Principal Component Analysis (PCA), 
one of the most popular and successful techniques used for dimension reduction  in data analysis and machine learning \cite{pearson1901liii,hotelling1933analysis,eckart1936approximation}. In low-rank matrix approximation 
one seeks the best   low-rank approximation of data matrix $\bfA$ with matrix $\bfB$ solving
\begin{eqnarray}\label{eq_PCA}
\text{ minimize } \|\bfA-\bfB\|_\nu  \\ 
\text{ subject to } \rank(\bfB) \leq  r. \nonumber
\end{eqnarray}
Here $\| \cdot \|_\nu$ is some matrix norm. The most popular matrix norms studied in the literature are the 
  \emph{Frobenius} 
$||\bfA||_F^2 = \sum_{i, j} a_{ij}^2$ 
and the  \emph{spectral}  $\|\bfA\| _2=\sup_{x\neq 0}\frac{\|\bfA\bfx\| _{2}}{\|\bfx\| _2}$ norms.  
By the Eckart-Young-Mirsky theorem \cite{eckart1936approximation,MR0114821},  \eqref{eq_PCA} is efficiently solvable via
  Singular Value Decomposition (SVD) for these two norms.
  The spectral  norm is an “extremal” norm---it measures the worst-case stretch of the matrix. On the other hand, the   Frobenius norm  is  “averaging”.
Spectral norm is usually applied in the situation when one is interested in actual columns for the subspaces they define and is of greater interest in scientific computing and numerical linear algebra. The  Frobenius norm is widely used in statistics and machine learning, see the survey of Mahony~\cite{MahonyM11} for further discussions.


Recently there has been considerable interest in developing algorithms for low-rank matrix approximation problems  for binary (categorical) data.
 Such variants of dimension reduction for high-dimensional data sets with binary attributes
arise naturally in applications involving binary data sets,  like latent  semantic analysis \cite{berry1995using}, pattern discovery for gene expression\cite{Shen2009}, or web search models \cite{DBLP:journals/jacm/Kleinberg99}, see  \cite{DanHJWZ15,Jiang2014,GutchGYT12,Koyuturk2003,PainskyRF16,Yeredor11} for other applications.
In many such applications it is much more desirable to approximate a binary matrix $\bfA$ with a binary matrix $\bfB$ of small (\GF or Boolean) rank because it
could provide a deeper insight into 
the semantics associated with the original matrix. There is a big body of work done on binary and Boolean low-rank matrix approximation, see   \cite{Bartl2010,BelohlavekV10,DanHJWZ15,LuVAH12,MiettinenMGDM08,DBLP:conf/kdd/MiettinenV11,Mitra:2016,VaidyaAG07,DBLP:conf/icde/Vaidya12} for further discussions.

Unfortunately, SVD is not applicable for the binary case  which makes such problems computationally much more challenging.  For binary matrix, its Frobenius norm  is equal to the number of its $1$-entries, that is  
$\|\bfA\| _F= \sum_{j=1}^n \sum_{i=1}^m | a_{ij} |$. Thus, the value $\|\bfA\ -\bfB\| _F$ measures the total Hamming distance from points (columns) of $\bfA$ to the subspace spanned by the columns of $\bfB$. For this variant of the low-rank binary matrix approximation, a number of approximation algorithms were developed, resulting in efficient  polynomial time approximation schemes (EPTASes)  obtained in 
\cite{BanBBKLW19, DBLP:journals/corr/abs-1807-07156}.
However, the algorithmic complexity of the problem for any vector-induced norm, including the spectral norm, remained open.
 
For binary matrices, the natural ``extremal'' norm to consider is the
$\| \cdot\|_1$ norm, also known as
  \emph{column-sum norm},  
  operator $1$-norm,  or H\"older matrix 1-norm. That is,  
for a matrix $\bfA$, 
\[\|\bfA\| _1=\sup_{x\neq 0}\frac{\|\bfA\bfx\| _{1}}{\|\bfx\| _1} = \max_{1 \leq j \leq n} \sum_{i=1}^m | a_{ij} |.\]
In other words, the column-sum norm is the maximum number of $1$-entries in a column in $\bfA$, whereas the Frobenius norm is the total number of $1$-entries in $\bfA$.
The column-sum norm is analogous to the spectral norm, only it is induced by the $\ell_1$ vector norm, not the $\ell_2$ vector norm.

We consider the  problem, where for  an $m\times n$
 binary data matrix $\bfA$ and a positive integer $r$, one seeks a binary matrix $\bfB$ optimizing 
\begin{eqnarray}\label{eq_PCA_1}
\text{ minimize } \|\bfA-\bfB\|_1  \\ 
\text{ subject to } \rank(\bfB) \leq  r. \nonumber
\end{eqnarray}
Here, by the rank of the binary matrix $\bfB$ we mean  its  \GF-rank.
%
 We refer to the problem defined by \eqref{eq_PCA_1}  as to \LRAGF.
The value  $\|\bfA\ -\bfB\| _1$ is the maximum  Hamming distance from each of the columns of $\bfA$ to the  subspace spanned by columns of $\bfB$ and thus, compared to approximation with the Frobenius norm, it  could provide a more accurate dimension reduction.    

It is easy to see by the reduction from the \textsc{Closest String} problem,  that already for $r=1$,  \LRAGF is NP-hard. 
The  main result of this paper is that \eqref{eq_PCA_1} admits a polynomial time approximation scheme (PTAS). 
More precisely, we prove the following theorem. 

\begin{theorem}
\label{thm:norm1PTAS} For every $\varepsilon\in (0, 1)$, there is a {randomized} $(1+\varepsilon)$-approximation algorithm for     \LRAGF of running time $m^{\OO(1)}n^{\OO(2^{4r}\cdot \varepsilon^{-4})}$.
\end{theorem}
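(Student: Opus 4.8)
The plan is to reformulate \eqref{eq_PCA_1} as a min--max radius clustering problem and then give a sampling-plus-rounding scheme. Since the columns of any rank-$\le r$ matrix $\bfB$ lie in a $\GF$-subspace $V=\mathrm{colspace}(\bfB)$ of dimension at most $r$, the admissible columns of $\bfB$ are exactly the $\le 2^r$ vectors of $V$, which I regard as candidate centers. Thus $\|\bfA-\bfB\|_1=\max_j \hdist(\bfa_j,\bfb_j)$ is minimized by assigning each column $\bfa_j$ to a nearest center of $V$, and the problem becomes: choose an $r$-dimensional subspace $V$ minimizing the largest Hamming distance from a column of $\bfA$ to $V$. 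First I would guess the optimum: there are only $m+1$ integer candidate values of $\opt$, so after trying each we may assume $\opt$ is known and aim to produce $V$ with every column within $(1+\varepsilon)\opt$ of $V$. For $r=1$ this is already the \textsc{Closest String} problem, so the scheme must generalize the Closest String PTAS.

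The key structural observation is that once the clustering is fixed---once we know the pattern $\bfw_j\in\GF(2)^r$ telling which center $\bfU\bfw_j$ the column $\bfa_j$ is assigned to---the choice of subspace decouples over the rows. Writing $b_{ij}=\langle \bfu_{i\cdot},\bfw_j\rangle$, row $i$ of $\bfB$ is determined by a single vector $\bfu_{i\cdot}\in\GF(2)^r$, for which there are only $2^r$ choices; each choice contributes a fixed $0/1$ cost $[a_{ij}\neq\langle\bfu_{i\cdot},\bfw_j\rangle]$ to the distance of every column $j$. Minimizing $\max_j\sum_i[a_{ij}\neq b_{ij}]$ over the independent per-row choices is a makespan-type problem, which I would attack through the natural LP relaxation followed by randomized rounding; a Chernoff bound shows the rounded solution exceeds the LP optimum (hence $\opt$) by at most $\varepsilon\opt$ on every column, provided $\opt$ is not too small. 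The regime of small $\opt$ I would dispatch separately, since then the error budget $\varepsilon\opt$ is sub-constant and the instance is essentially rigid.

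It remains to learn the clustering. Here I would sample a set $S$ of $\OO(2^{4r}\varepsilon^{-4})$ columns; enumerating all subsets of this size contributes the factor $n^{\OO(2^{4r}\varepsilon^{-4})}$, and for the correct sample one additionally guesses the at most $2^r$ patterns of its elements (a factor depending only on $r$ and $\varepsilon$). From the sampled columns of each cluster one reconstructs an approximate center via a consensus computation, and from the reconstructed centers a spanning set for a subspace $V$; every column of $\bfA$ is then assigned to its nearest center in $V$, and the row-decoupled rounding above produces the final $\bfB$. The analysis combines two concentration arguments---one showing that the reconstructed subspace $V$ lies close enough to the optimal $V^*$ that balls of radius $\opt$ around centers of $V^*$ are contained in balls of radius $(1+\varepsilon)\opt$ around centers of $V$, and one controlling the rounding---together with a union bound over the $n$ columns.

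The main obstacle is precisely that the objective is a maximum rather than a sum: \emph{every} column must be covered to within $(1+\varepsilon)\opt$, so, unlike in Frobenius-norm (averaging) variants, one cannot neglect clusters receiving only a few columns. Yet such small clusters are exactly the ones a uniform sample is unlikely to witness, while they may be responsible for entire directions of $V^*$ that are nonetheless needed to cover their own members. Reconciling ``learn the subspace from a bounded sample'' with ``cover every column, including those in sparsely populated clusters'' is the crux; I expect to resolve it by exploiting that there are at most $2^r$ clusters in total, so that the subset-enumeration can be steered to include enough representatives of each relevant cluster to pin down its center within additive error $\varepsilon\opt$, rather than relying on random hits. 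Verifying that this reconstruction, together with the subspace (rather than per-cluster) constraint and the makespan rounding, simultaneously meets the $(1+\varepsilon)$ guarantee on all columns is the technical heart of the proof.
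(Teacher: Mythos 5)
Your overall skeleton matches the paper's (view the $2^r$ vectors of the column space as constrained centers, guess/learn the clustering of columns, then fix the centers by an LP relaxation with randomized rounding), but two steps that you flag as "to be resolved" are exactly where the real work lies, and as sketched they do not go through. First, the partition-learning step: you propose to enumerate a sample of $\OO(2^{4r}\varepsilon^{-4})$ columns, reconstruct each cluster's center by a consensus computation, and argue that balls of radius $\opt$ around the true centers are contained in balls of radius $(1+\varepsilon)\opt$ around the reconstructed ones. No bounded sample can localize a center that precisely: if $t$ columns of a cluster each lie within distance $\opt$ of the true center $\bfc^*$, their majority vector can still differ from $\bfc^*$ in up to $2\opt$ positions (and many distinct candidate centers are consistent with the sample), so the containment you need fails and nearest-center assignment to the reconstructed subspace only yields an $O(1)$-approximate partition, not a $(1+\OO(\varepsilon))$-approximate one. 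The paper avoids reconstructing centers from samples altogether: it applies the Ostrovsky--Rabani $(\epsilon,\ell/4,\ell/2\epsilon)$-distorted random projection to dimension $m'=\OO(\log n/\epsilon^4)$, enumerates \emph{all} $2^{m'k}=n^{\OO(2^{4r}/\epsilon^4)}$ tuples of projected centers, and shows (via the distortion properties, with a case analysis on whether $\hdist(\bfx,\bfc_i)\le \ell/2\epsilon$) that the partition induced by one of these tuples has cost at most $(1+4\epsilon)\opt$.

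Second, the rounding step: Chernoff gives an additive deviation of order $\varepsilon' m$ over $m$ positions, so to stay within $\varepsilon\opt$ you need $\opt$ to be a constant fraction of the number of positions being rounded; "small $\opt$" is not an exceptional, rigid regime (think $\opt\approx\sqrt{m}$), and cannot be dispatched by brute force. The paper handles this with the Li--Ma--Wang device that your proposal omits: for each cluster guess $1+4/\varepsilon$ representative columns, fix the solution on the set $Q$ of positions where they all agree (and satisfy the relations), losing only an additive $\frac{1}{r-1}\opt$ there, and observe that the remaining positions number at most $rk\cdot\opt$, so on $\overline{Q}$ the optimum is an $\Omega(1/(rk))$ fraction of the instance length and the LP rounding (or exhaustive search when $|\overline{Q}|=\OO(\log n)$) applies. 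Without both of these ingredients --- the distorted-embedding enumeration for the partition and the $Q$-fixing to boost the relative optimum before rounding --- the proposed scheme does not deliver the $(1+\varepsilon)$ guarantee.
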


In order to prove Theorem~\ref{thm:norm1PTAS} we obtain a PTAS for a more general problem, namely \rcenter. This problem has a strong expressive power and can be used to obtain PTASes for a number of problems related to  \LRAGF. For example, for the variant, when the rank of the matrix $\bfB$ is not over \GF but is Boolean. Or a variant of clustering, where we want to partition binary vectors into groups, minimizing the maximum distance in each of the group  to some subspace of small dimension. We provide  discussions of other applications of our work in Section~\ref{sec:applications}.

\paragraph*{Related work.}  
 The variant of \eqref{eq_PCA} with both matrices $\bfA$ and $\bfB$ binary, and $\|\cdot\|_{\nu}$ being the Frobenius norm, is known as \bmfgfr. 
Due to  numerous applications,
  various heuristic algorithms for \bmfgfr  could be found in the literature
 \cite{DBLP:conf/icdm/JiangH13a,Jiang2014,fu2010binary,Koyuturk2003,Shen2009}. 

When it concerns  rigorous algorithmic analysis of  \bmfgfr, 
Gillis and Vavasis ~\cite{GillisV15} and Dan et al. \cite{DanHJWZ15} have shown that \bmfgfr is NP-complete  for every $r\geq1$. A subset of the authors studied parameterized algorithms for \bmfgfr in \cite{FominGP18}. The first approximation algorithm for \bmfgfr is due to 
Shen et al. \cite{Shen2009}, who gave a $2$-approximation algorithm for the special case of $r=1$. For rank $r>1$,  Dan et al. \cite{DanHJWZ15} have shown that a $(r/2 +1 +\frac{r}{2(2^r-1)})$-approximate solution can be formed from $r$ columns of the input matrix $\bfA$. Recently, these algorithms were significantly improved in \cite{BanBBKLW19, DBLP:journals/corr/abs-1807-07156}, where efficient  polynomial time approximation schemes (EPTASes)  were obtained.

Also note that for general (non-binary) matrices a significant amount of work is devoted to $L1$-PCA, where one seeks a low-rank matrix $\bfB$ approximating given matrix $\bfA$ in \emph{entrywise} $\ell_1$ norm, see e.g. \cite{SongWZ17}.
\medskip

While our main motivation stems from low-rank matrix approximation problems, \LRAGF extends   \textsc{Closest String},  very well-studied problem about strings. 
Given a set of binary strings $S = \{s_1 , s_2 , \dots , s_n \}$, each of length $m$, the \textsc{Closest String} problem is to find the smallest $d$ and a string $s$ of length $m$ which is within Hamming distance $d$ to each $s_i \in S$. 

A long history of algorithmic improvements for  \textsc{Closest String} was concluded by 
 the PTAS of running time  $n^{\OO(\epsilon^{-5})}$   by 
 Li, Ma, and Wang
\cite{LiMW02}, which running time was later improved to $n^{\OO(\epsilon^{-2})}$~\cite{MaS09}. Let us note that   \textsc{Closest String} can be seen as a special case of  \LRAGF  for $r=1$. Indeed,
   \textsc{Closest String}  is exactly the  variant of \LRAGF, where columns of $\bfA$ are strings of $S$ and  approximating matrix $\bfB$ is required to have all columns equal.  Note that in  a binary matrix $\bfB$ of rank $1$ all non-zero columns are equal. However,  it is easy to construct an equivalent instance of \textsc{Closest String} by attaching to each string of $S$ a string   $1^{m+1}$, such that the solution  to \LRAGF for $r=1$ does not have zero columns.

  Cygan et al.  \cite{cygan_et_al:LIPIcs:2016:6023} proved that the existence of an EPTAS for  \textsc{Closest String},
  that is  $(1+\varepsilon)$-approximation in time $n^{\OO(1)} \cdot f(\varepsilon)$, for any computable  function $f$, 
   is   unlikely, as it would imply that FPT$=$W[1], a highly unexpected collapse in the hierarchy of parameterized complexity classes. They also showed that the existence of a PTAS for  \textsc{Closest String} with running time $f(\varepsilon) n^{o(1/\varepsilon)}$, for any computable function $f$, would contradict the Exponential Time Hypothesis. The result of Cygan et al.  implies that \LRAGF  also does not admit EPTAS (unless FPT$=$W[1]) already for   $r=1$.
   
A  generalization of   \textsc{Closest String},
\textsc{$k$-closest strings} is also known to admit a PTAS
 \cite{JiaoXL04,GasieniecJL04}.  
 This problem corresponds to the  variant of \LRAGF, where    approximating matrix $\bfB$ is required to have at most $k$ different columns. However, it is not clear how solution to this special case can be adopted to solve 
 \LRAGF.

\paragraph*{Our approach.} 
The usual toolbox of techniques to  handle  NP-hard variants of low-rank matrix approximation problems  like  sketching~\cite{Woodr14}, sampling,  and dimension reduction~\cite{BlumHK17} is based on   randomized linear algebra.  It is very unclear whether any 
 of these techniques can be used to solve  even the simplest case  of  \LRAGF with $r=1$. For example for sampling, the presence of just one outlier  outside of a sample, makes all information
we can deduce from the sample about  the column sum norm of the  matrix,
%
   completely useless.  This is exactly the reason why approximation  algorithms for   \textsc{Closest String}
do not rely on such techniques. 
On the other hand, randomized  dimension reduction appears to be very helpful as a ``preprocessing'' procedure whose application allows us to solve \LRAGF by  applying   linear programming techniques similar to the ones  developed for the   \textsc{Closest String}.
From a very general perspective, our algorithm consists of three steps. While each of these steps is based on the previous works, the way to combine these steps, as well as the correctness proof, is a non-trivial task. We start with a high-level description of the steps and then provide more technical   explanations.

\medskip\noindent\textbf{Step~1.} In order to solve \LRAGF, we encode it as the  \rcenter problem. 
This initial step is almost identical to the encoding used in \cite{DBLP:journals/corr/abs-1807-07156} for \bmfgfr. 
Informally, \rcenter is defined as follows. For a given set of binary vectors $X$, a positive integer $k$,  and a set of constraints, we want to find $k$ binary vectors $C=(\bfc_1,\dots, \bfc_k)$ satisfying the constraints and minimizing  $\max_{\bfx \in X} \hdist(\bfx,C)$, where $\hdist(\bfx,C)$ is the  Hamming distance   between 
$\bfx$ and  the closest vector from $C$.  
For example, when $k=1$ and there are no constraints, then this is just 
the \textsc{Closest String} problem over binary alphabet. 

In the technical description below we give a formal definition of this encoding and in Section~\ref{sec:applications} we prove that \LRAGF is a special case of  \rcenter. Now on, we are working with  \rcenter.

\medskip\noindent\textbf{Step~2.}  We give an approximate Turing reduction which allows to find a partition of vector set $X$ into clusters $X_1, \dots, X_k$ such that if we find a tuple of vectors 
$C=(\bfc_1,\dots, \bfc_k)$ satisfying the constraints and minimizing 
$\max_{1\leq i\leq k,\bfx\in X_i} \hdist(\bfx,\{\bfc_i\})$, then the same tuple $C$ will be a good approximation to \rcenter. In order to obtain such a partition, we use the dimension reduction technique of Ostrovsky and Rabani~\cite{OstrovskyR02}.  While this provides us with important structural information,  we are not done  yet. Even with a given partition, the task of finding  the corresponding tuple of ``closest strings'' $C$ satisfying the constraints, is non-trivial.

\medskip\noindent\textbf{Step~3.} In order to find the centers, we implement the  approach used by    Li, Ma, and Wang
in \cite{LiMW02} to solve \textsc{Closest String}. By brute-forcing, it is possible to reduce the solution of the problem to   special instances, which loosely speaking, have a large optimum.  
Moreover,  \rcenter\  has an Integer Programming (IP)  formulation. 
  Similar to \cite{LiMW02}, for the reduced instance of \rcenter (which has a  ``large optimum'') 
it is possible to prove that the randomized rounding of the corresponding Linear Program (LP) relaxation of this IP,  provides a good approximation.

\medskip
Now  we give a more technical description of the algorithm.

\smallskip\noindent\textbf{Step~1. \rcenter.} 
To define  \rcenter, we need to define some notations. 
A $k$-ary relation $R$ is a set of binary $k$-tuples with elements from $\{0,1\}$. A $k$-tuple $t=(t_1,\dots, t_k)$ \emph{satisfies} $R$, we write $t\in R$, if $t$ is equal to one of the $k$-tuples in $R$.  
\begin{definition}[Vectors satisfying $\cR$]
    Let $\cR=(R_1, \dots, R_m)$ be a tuple of $k$-ary relations. We say that a tuple $C=(\bfc_1, \bfc_2, \dots, \bfc_k)$ of binary $m$-dimensional vectors  \emph{satisfies $\cR$} and write $<C,\cR>$, if 
 $(\bfc_1[i],\ldots,\bfc_k[i])\in R_i$ for all $i\in \{1,\ldots,m\}$.
\end{definition}

For example, for $m=2$, $k=3$, $R_1=\{(0,0,1), (1,0,0)\}$, and  $R_2=\{(1,1,1), (1,0,1), (0,0,1)\}$, the tuple of vectors 
\[
\bfc_1=\left(
\begin{array}{c}
0\\
1\\
\end{array}
\right) , \, 
\bfc_2=\left(
\begin{array}{c}
0\\
0\\
\end{array}
\right) , \,
\bfc_3=\left(
\begin{array}{c}
1\\
1\\
\end{array}
\right)   \]
satisfies $\cR=(R_1,  R_2)$ because  $(\bfc_1[1],\bfc_2[1],\bfc_3[1])=(0,0,1)\in   R_1$ and $({\bfc}_1[2],\bfc_2[2],\bfc_3[2])=(1,0,1)\in   R_2$.

Let us recall that the \emph{Hamming distance} between two vectors $\bfx, \bfy\in\{0,1\}^m$, where $\bfx=(x_1,\ldots,x_m)^\intercal$ and $\bfy=(y_1,\ldots,y_m)^\intercal$, is $\hdist(\bfx,\bfy)=\sum_{i=1}^m |x_i-y_i|$ or, in   words, the number of positions $i\in\{1,\ldots,m\}$ where $x_i$ and $y_i$ differ. 
Recall that 
for a set of vectors $C\subseteq \{0,1\}^m$ and a vector $\bfx\in \{0,1\}^m$,  
$\hdist(\bfx,C)=\min_{\bfc\in C}\hdist(\bfx,\bfc)$.   For sets $X,C\subset \{0,1\}^m$, we 
define  \(\Hdist(X,C)=\max_{\bfx\in X}\hdist(\bfx,C).\)

Now we define \rcenter formally. 

\defproblem{\rcenter}{A set $X\subseteq \{0,1\}^m$ of $n$ vectors, a positive integer $k$, and a tuple of $k$-ary relations
$\cR=(R_1, \dots, R_m)$. }{Among all  tuples $C=(\bfc_1,\ldots,\bfc_k)$ of vectors from $\{0,1\}^m$ satisfying $\cR$, find a tuple $C$ minimizing  $\Hdist(X,C)$.}

As in the case of \bmfgfr in \cite{DBLP:journals/corr/abs-1807-07156}, we prove that \LRAGF is a special case of 
\rcenter, where $k=2^r$. For completeness, this proof and other applications of \rcenter are given in Section~\ref{sec:applications}. Thus, to prove Theorem~\ref{thm:norm1PTAS}, it is enough to design a PTAS for \rcenter.

\begin{restatable}{theorem}{thmpartitionprob}
\label{thm:kclustering}
There is an algorithm for \rcenter\ that given an instance $J=(X,k,\RR)$ and $0<\varepsilon<1$, runs in time 
$m^{\OO(1)}n^{\OO((k/\varepsilon)^4)}$, and outputs 
a $(1+\varepsilon)$-approximate solution 
with probability at least $1-2n^{-2}$. 
\end{restatable}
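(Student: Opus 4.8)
The plan is to follow the three-step program sketched in the introduction, which reduces \rcenter to a sequence of constrained ``closest string''-type subproblems that are then attacked by LP relaxation and randomized rounding. I would structure the proof as follows.

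\textbf{Step 1: Reduce to finding a good clustering via dimension reduction.}
First I would guess (by enumeration, or implicitly) the optimal value $\opt=\Hdist(X,C^*)$ up to a $(1+\varepsilon)$ factor; since $\opt$ is an integer between $0$ and $m$, this costs only a polynomial overhead. The goal of this step is to produce a partition $X=X_1\cup\cdots\cup X_k$ that is \emph{compatible} with some near-optimal solution, meaning that there is a tuple $C=(\bfc_1,\ldots,\bfc_k)$ satisfying $\RR$ with $\max_{1\le i\le k}\Hdist(X_i,\{\bfc_i\})\le(1+\varepsilon)\opt$. Here I would invoke the Ostrovsky--Rabani dimension-reduction/sampling technique as an approximate Turing reduction: by sampling a small (roughly $\OO((k/\varepsilon)^{\OO(1)})$-sized) subset of coordinates or of the vectors, one can enumerate $n^{\OO((k/\varepsilon)^{\OO(1)})}$ candidate clusterings and argue that at least one of them is compatible with the optimum with high probability. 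The upshot is that it suffices to solve, for each candidate clustering, the problem of finding centers $\bfc_i$ respecting the relations $\RR$ and minimizing the maximum per-cluster radius.

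\textbf{Step 2: Handle small optimum by brute force; set up the IP for large optimum.}
Once the clustering is fixed, the residual problem is a constrained multi-center ``closest string'' instance. Following Li--Ma--Wang, I would split into two regimes. If $\opt$ is small (below some threshold polynomial in $k/\varepsilon$), I would guess the positions where the optimal centers disagree with a fixed reference column from each cluster and solve exactly by brute force over $n^{\OO((k/\varepsilon)^{\OO(1)})}$ choices. If $\opt$ is large, I would write an integer program whose variables encode, for each coordinate $i$, which tuple of $R_i$ is selected (the relations $\RR$ enter as the feasible set of each coordinate's choice), with the objective $\min t$ subject to $\sum_i[\text{disagreement of }\bfc_j\text{ with }\bfx]\le t$ for every $\bfx\in X_j$ and every $j$. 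The relations couple the coordinates only within a single coordinate index $i$, so the IP has the same ``one constraint block per input vector'' structure that makes the Closest String LP tractable.

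\textbf{Step 3: LP relaxation and randomized rounding.}
For the large-optimum regime I would relax the IP to an LP, solve it in polynomial time, and round each coordinate's choice independently according to the fractional solution, respecting the relation $R_i$ at each coordinate. The key estimate, exactly as in Li--Ma--Wang, is that for each input vector the rounded Hamming distance concentrates around its fractional value, with deviation $\OO(\sqrt{\opt\log n})$ by a Chernoff/Hoeffding bound; when $\opt$ is large (at least $\Omega(\varepsilon^{-2}\log n)$, which is why the small-optimum case must be peeled off), this deviation is at most $\varepsilon\cdot\opt$. A union bound over the $n$ vectors and the $k$ centers gives failure probability $\OO(n^{-2})$, matching the claimed bound. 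Combining the contributions $(1+\varepsilon)$ from Step 1 and $(1+\varepsilon)$ from rounding, and rescaling $\varepsilon$, yields an overall $(1+\varepsilon)$-approximation.

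\textbf{Main obstacle.}
I expect the hardest part to be Step 1: proving that the dimension-reduction sampling really does produce, among polynomially many candidates, a clustering compatible with a near-optimal solution \emph{under the column-sum (maximum, not sum) objective}. The Ostrovsky--Rabani machinery is natively about sum-type clustering costs, whereas here a single misclassified outlier can ruin the $\max$-objective; reconciling this---presumably by arguing that a near-optimal clustering needs only to be correct on the ``critical'' vectors determining the maximum, and that these are captured by the sample with high probability---is where the genuine technical work lies. The LP rounding in Step 3, while it requires care to respect the per-coordinate relations $R_i$, is otherwise a fairly direct adaptation of the known Closest String analysis.
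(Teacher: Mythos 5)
Your Steps 1 and 3 track the paper: the paper also guesses $\ell=\opt(J)$ over $[m]$, applies the Ostrovsky--Rabani $(\epsilon,\ell/4,\ell/2\epsilon)$-distorted random linear map, enumerates all $n^{\OO(k/\epsilon^4)}$ center tuples in the reduced space to generate candidate partitions (your ``main obstacle'' is in fact unproblematic there: the distortion guarantee is per pair of vectors, so the case analysis bounds $\hdist(\bfx,\bfc_i)\le(1+4\epsilon)\ell$ for \emph{every} vector, and the max-objective causes no extra difficulty), and then solves the partitioned problem by an LP over variables $y_{j,t}$, $t\in R_j$, with randomized rounding. The genuine gap is in your Step 2, the small-optimum regime. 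Your plan is to ``guess the positions where the optimal centers disagree with a fixed reference column from each cluster'' when $\opt=\OO(\varepsilon^{-2}\log n)$; but these positions are an arbitrary subset of $[m]$ of size up to $\opt$ per center, so this enumeration costs $m^{\Theta(k\cdot\opt)}=m^{\Theta(k\varepsilon^{-2}\log n)}$, which is quasi-polynomial and far outside the claimed $m^{\OO(1)}n^{\OO((k/\varepsilon)^4)}$ bound (your own count of ``$n^{\OO((k/\varepsilon)^{\OO(1)})}$ choices'' does not correspond to the enumeration you describe, and your threshold ``polynomial in $k/\varepsilon$'' is inconsistent with the $\Omega(\varepsilon^{-2}\log n)$ threshold your Chernoff bound actually needs).

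The missing idea is the Li--Ma--Wang $r$-representatives argument (the paper's Lemma~\ref{lemma:ragree} and Lemma~\ref{lem:firstphase}): guess $r=1+4/\varepsilon$ vectors from each cluster ($n^{\OO(rk)}$ choices in total), let $Q$ be the coordinates where, in every cluster, all $r$ representatives agree and the resulting $k$-tuple lies in $R_j$, and copy the representatives' values there; this costs only an additive $\frac{1}{r-1}\opt(J)$ in every cluster, and it forces $|\overline{Q}|\le rk\cdot\opt(J)$, so the residual instance on $\overline{Q}$ has optimum at least $|\overline{Q}|/(rk)$. Then either $|\overline{Q}|<9c^2\log n/\delta^2$, in which case one enumerates all $2^{k|\overline{Q}|}=n^{\OO(c^2k/\delta^2)}$ \emph{assignments of the centers on the few remaining coordinates} (not subsets of positions of $[m]$ --- this is what keeps the brute force within $n^{\OO(\cdot)}$), or the additive-$\epsilon|\overline{Q}|$ Chernoff bound applies. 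One must also carry the offsets $d_{\bfx}=\hdist^{Q}(\bfx,\bfx^{(1)}_i)$ into the LP constraints, since the error already incurred on $Q$ differs per vector. Without this mechanism (or an equivalent substitute, e.g.\ an exact algorithm for the constrained $k$-center problem parameterized by $d$ with single-exponential dependence, which you neither cite nor prove and which is not obvious for arbitrary relations $\cR$), your proposal does not establish the stated running time; your large-optimum LP rounding with deviation $\OO(\sqrt{\opt\log n})$ is fine on its own, but it cannot be invoked until the small-optimum case is disposed of correctly.
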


By the argument above, Theorem~\ref{thm:norm1PTAS} is an immediate corollary of Theorem~\ref{thm:kclustering}. 

\medskip\noindent\textbf{Step~2: Dimension reduction.}  
Let $J=(X,k,{\cal R}=(R_1,\ldots,R_m))$ be an instance of \rcenter  and $C=(\bfc_1,\ldots,\bfc_k)$ be a solution to $J$, that is, a tuple of vectors satisfying ${\cal R}$. Then, the cost of  $C$ is $\Hdist(X,C)$. Given the tuple $C$, there is a natural way we can partition the set of vectors $X$ into $k$ parts $X_1\uplus\cdots\uplus X_k$ such that \[\Hdist(X,C)=\max_{{i\in \{1,\ldots,k\}},\bfx\in X_i}\hdist(\bfx,\bfc_i).\] 
Thus, for each vector $\bfx$ in $X_i$, the closest to $\bfx$ vector from $C$ is  $\bfc_i$. 
We call  such a partition $X_1\uplus\cdots\uplus X_k$ the \emph{clustering of $X$ induced by $C$} and refer to the sets $X_1,\ldots,X_k$ as the  {\em clusters corresponding to   $C$}. 
%
We use  $\opt(J)$  to denote the cost of an optimal solution to $J$. That is, 
\(
\opt(J)=\min \{  \Hdist(X,C) ~| ~ <C,\cR>  \}.
\)
In fact, even if we know the clustering of $X$ induced by a hypothetical optimal solution, finding a good solution is not trivial as the case when $k=1$ is the same as the \textsc{Closest String} problem. 

As mentioned before, our approach is to reduce to a version of  \rcenter, where we know the partition of $X$, and solve the corresponding problem. 
That is, 
we design an  approximation scheme for  the following partitioned  version of the problem.

\defproblem{\rpartitioncenter}{A positive integer $k$, a set $X\subseteq \{0,1\}^m$ of $n$ vectors partitioned into $X_1\uplus\ldots\uplus X_k$,  and a tuple of $k$-ary relations
$\cR=(R_1, \dots, R_m)$. }{Among all tuples $C=(\bfc_1,\ldots,\bfc_k)$ of vectors from $\{0,1\}^m$ satisfying $\cR$, find a tuple $C$ minimizing  
$\max_{i\in \{1,\ldots,k\},\bfx\in X_i} \hdist(\bfx,\bfc_i)$.}

For an instance 
$J'=(k,X=X_1\uplus\ldots X_k,\RR)$ of \rpartitioncenter, we use $\opt(J')$ to denote the cost of an optimal solution to $J'$. 
That is, \[\opt(J')=\min_{C=(\bfc_1,\ldots,\bfc_k) \mbox{ s.t. } <C,\RR>} \left\{  \max_{i\in \{1,\ldots, k\},\bfx\in X_i}\hdist(\bfx,\bfc_i)   \right\}.\]

Clearly, for an instance $J=(X,k,\RR)$ of \rcenter  and a partition of $X$ into $X_1\uplus\ldots \uplus X_k$, any solution to the instance  
$J'=(k,X=X_1\uplus\ldots X_k,\RR)$ of \rpartitioncenter, of cost  $d$, is also a solution to $J$ with cost at most $d$. 
We prove that there is a randomized polynomial time algorithm that given an instance $J=(X,k,\RR)$ of \rcenter  and $0<\epsilon\leq \frac{1}{4}$, outputs a collection ${\cal I}$ of \rpartitioncenter instances $J'=(k,X=X_1\uplus\ldots \uplus X_k,\RR)$ such that  the cost of at least one instance in ${\cal I}$ is at most $(1+4\epsilon)\opt(J)$ with high probability.

\begin{restatable}{lemma}{lempartcreation}
\label{lem:partitioncreation}
There is an algorithm that given an instance $J=(X,k,\RR)$ of \rcenter, $0<\epsilon\leq \frac{1}{4}$, and $\gamma>0$, runs in time $m^2 n^{\OO(k/\epsilon^4)}$, and outputs a collection ${\cal I}$ of $m\cdot n^{\OO(k/\epsilon^4)}$ instances of \rpartitioncenter such that each instance in ${\cal I}$ is of the form $(k,X=X_1\uplus\ldots \uplus X_k,\RR)$, and there exists $J'\in {\cal I}$ such that $\opt(J')\leq (1+4\epsilon)\opt(J)$ with probability at least $1-n^{-\gamma}$. 
\end{restatable}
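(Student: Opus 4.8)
The goal is to produce a polynomial-sized collection of partition instances, one of which nearly matches $\opt(J)$, by guessing, for each of the $k$ clusters, a low-dimensional ``sketch'' of where the cluster's center lives, and then assigning each vector of $X$ to the closest sketch. The natural tool, invoked in Step~2, is the dimension-reduction result of Ostrovsky and Rabani. First I would fix an (unknown) optimal solution $C^{\ast}=(\bfc_1^{\ast},\ldots,\bfc_k^{\ast})$ together with its induced clustering $X_1^{\ast}\uplus\cdots\uplus X_k^{\ast}$, so that $\Hdist(\bfx,\bfc_i^{\ast})\leq \opt(J)$ for every $\bfx\in X_i^{\ast}$. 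The objective is to recover a partition close enough to this one that the optimal centers are still (almost) feasible and (almost) optimal for the resulting \rpartitioncenter instance.

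\textbf{Key steps.}
The algorithm proceeds as follows. I would sample a random coordinate subset (or apply the Ostrovsky--Rabani embedding) of size $O(k/\epsilon^4)$; restricted to these coordinates, Hamming distances between any vector and any fixed center are preserved up to a $(1+O(\epsilon))$ factor with high probability, after a standard Chernoff/union-bound argument over the at-most-$kn$ relevant (vector, center) pairs. Over the sampled coordinates there are only $2^{O(k/\epsilon^4)}$ possible center-patterns per cluster, hence $n^{O(k/\epsilon^4)}$ choices for the full tuple of $k$ sketches; I would enumerate all of them. For each guessed tuple of sketches $(\bfs_1,\ldots,\bfs_k)$, I would assign each $\bfx\in X$ to the cluster $i$ minimizing the sampled-coordinate distance $\hdist(\bfx|_S,\bfs_i)$, yielding one candidate partition $X_1\uplus\cdots\uplus X_k$ and one \rpartitioncenter instance $J'=(k,X=X_1\uplus\cdots\uplus X_k,\cR)$. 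This produces the required collection $\mathcal{I}$ of size $m\cdot n^{O(k/\epsilon^4)}$, and the running-time bound $m^2 n^{O(k/\epsilon^4)}$ follows from enumerating the sketches and assigning $n$ vectors across $m$ coordinates.

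\textbf{Correctness.}
For the distinguished guess where each $\bfs_i$ equals the projection of $\bfc_i^{\ast}$ onto the sampled coordinates, I claim the induced instance $J'$ satisfies $\opt(J')\leq(1+4\epsilon)\opt(J)$. The point is that $C^{\ast}$ itself remains a feasible solution to $J'$ (it satisfies $\cR$ regardless of the partition), so it suffices to bound $\max_{i,\bfx\in X_i}\hdist(\bfx,\bfc_i^{\ast})$ for the \emph{new} partition. A vector $\bfx$ might be assigned to a cluster $j$ different from its optimal cluster $i$, but assignment is by minimum sampled distance, so $\hdist(\bfx|_S,\bfs_j)\leq\hdist(\bfx|_S,\bfs_i)$; applying the distance-preservation guarantee in both directions converts this into $\hdist(\bfx,\bfc_j^{\ast})\leq(1+O(\epsilon))\hdist(\bfx,\bfc_i^{\ast})\leq(1+O(\epsilon))\opt(J)$, and tracking constants gives the factor $1+4\epsilon$. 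Finally, a union bound over the sampling failure event yields the success probability $1-n^{-\gamma}$ (choosing the sample size with the right hidden constant depending on $\gamma$).

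\textbf{Main obstacle.}
The delicate point is exactly the misassignment analysis: the sampled-coordinate distances must be simultaneously faithful both for the correct center and for every competing center, and for \emph{every} vector, not just in expectation. A single outlier whose sampled distance is badly distorted could be shunted into the wrong cluster and blow up the column-sum (maximum) objective---this is precisely the difficulty flagged in the ``Our approach'' discussion, where sampling alone fails for the $\ell_\infty$-type cost. I expect the technical heart of the proof to be showing that the Ostrovsky--Rabani dimension reduction, unlike naive subsampling, preserves the relevant distances uniformly enough that no vector is catastrophically misplaced; the exponent $\epsilon^{-4}$ and the constant $4$ in $(1+4\epsilon)$ should both emerge from carefully composing the embedding's distortion bound with the triangle-inequality argument above.
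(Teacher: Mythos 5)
There is a genuine gap, and it is exactly at the point you wave through with ``applying the distance-preservation guarantee in both directions.'' No scale-oblivious $(1+O(\epsilon))$-multiplicative dimension reduction for Hamming distances into $\OO(\log n/\epsilon^4)$ dimensions exists (this is why Ostrovsky and Rabani work with $(\delta,\ell,h)$-distorted maps rather than with a Johnson--Lindenstrauss-type guarantee): their map is built from a random matrix drawn from ${\cal A}_{m,m'}(\epsilon^2/\ell)$, so both the sampling rate and the window $[\ell/4,\ell/2\epsilon]$ in which distances are (approximately) preserved depend on a scale parameter $\ell$; distances below the window are only prevented from expanding, and distances above it are only prevented from shrinking. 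Likewise, your fallback of naive coordinate subsampling plus Chernoff over the $kn$ pairs gives at best additive error $\epsilon\ell$ at a \emph{known} scale $\ell$, never a uniform multiplicative guarantee (think of a pair at distance $1$, or at distance $m$). Your proposal never fixes or guesses this scale, so the tool you invoke does not exist in the form your correctness argument needs: when a vector $\bfx$ with optimal center $\bfc^*_i$ is captured by the sketch of $\bfc^*_j$, the quantity $\hdist(\bfx,\bfc^*_j)$ may lie far above any preserved range, and you cannot convert $\hdist(\bfx|_S,\bfs_j)\le\hdist(\bfx|_S,\bfs_i)$ back to $\hdist(\bfx,\bfc^*_j)\le(1+O(\epsilon))\opt(J)$ by ``preservation in both directions.''

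The paper's proof supplies precisely the two ingredients you are missing. First, it enumerates the scale: for every $\ell\in[m]$ it draws a separate map from ${\cal A}_{m,m'}(\epsilon^2/\ell)$ with $m'=\OO(\log n/\epsilon^4)$ (note: the reduced dimension is $\OO(\log n/\epsilon^4)$, not $\OO(k/\epsilon^4)$ as in your bookkeeping), and only the iteration with $\ell=\opt(J)$ is analyzed; this guessing is also the source of the factor $m$ in $|{\cal I}|$ and of the $m^2$ in the running time, which in your write-up appear without justification. Second, in the correctness argument for that iteration it splits according to where $\hdist(\bfx,\bfc^*_i)$ falls relative to the window: if $\ell<\hdist(\bfx,\bfc^*_i)\le\ell/2\epsilon$, the preserved range plus the assignment inequality gives $\hdist(\bfx,\bfc^*_i)\le\frac{1+\epsilon}{1-\epsilon}\ell\le(1+4\epsilon)\ell$ (this is where $\epsilon\le 1/4$ and the constant $4$ come from); if $\hdist(\bfx,\bfc^*_i)>\ell/2\epsilon\ge 2\ell$, only the one-sided lower bound of the distortion is used to derive $2(1-\epsilon)\alpha\ell\le(1+\epsilon)\alpha\ell$, a contradiction, so this case simply cannot occur. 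Your outline has no counterpart to the scale enumeration or to this case analysis, and without them the central step of the argument does not go through.
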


To prove Lemma~\ref{lem:partitioncreation}, we use the dimension reduction technique of Ostrovsky and Rabani from  \cite{OstrovskyR02}. Loosely speaking, this  technique provides a linear map $\psi$ with the following properties.  For any $\bfy\in \{0,1\}^m$, $\psi(\bfy)$ is a 0-1 vector of length $\OO(\log n/\epsilon^4)$, and for any set $Y$ of $n+k$ vectors,  Hamming distances between any pair of vectors in $\psi(Y)$ are relatively preserved   
with high probability.  So we assume that $\psi$ is ``a good map'' for the set of vectors $X\cup C$, where $C=(\bfc_1,\ldots,\bfc_k)$ is a hypothetical optimal solution to $J$.  Then, we guess the potential tuples of vectors $(\phi(\bfc_1),\ldots,\phi(\bfc_k))$  for the hypothetical optimal solution $C=(\bfc_1,\ldots,\bfc_k)$, and use these choices for $(\phi(\bfc_1),\ldots,\phi(\bfc_k))$ to construct partitions of $X$, and thereby construct instances in ${\cal I}$.  Lemma~\ref{lem:partitioncreation} is proved in Section~\ref{sec:dimred}. 


\medskip\noindent\textbf{Step~3:  LP relaxation.} 
Because of Lemma~\ref{lem:partitioncreation}, to prove Theorem~\ref{thm:kclustering}, it is enough to design a PTAS for 
\rpartitioncenter. So we prove the following lemma. 

\begin{restatable}{lemma}{lempartitionprob}
\label{lem:partkclustering}
There is an algorithm for \rpartitioncenter\ that given an instance $J=(k,X=X^i_1\uplus\ldots\uplus X^i_k,\RR)$ and $0<\epsilon<1/2$, runs in time 
$m^{\OO(1)}n^{\OO((k/\epsilon)^4)}$, and outputs a solution of cost at most $(1+\epsilon)\opt(J)$ with probability at least $1-n^{-2}$. 
\end{restatable}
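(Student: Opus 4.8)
The plan is to cast \rpartitioncenter as an integer program, relax it to a linear program, and recover an integral solution by randomized rounding, following the template of Li, Ma, and Wang for \textsc{Closest String}. The one genuinely new feature compared with vanilla \textsc{Closest String} is that we must produce $k$ centers \emph{simultaneously} while respecting the relations $\RR$; crucially, this coupling is absorbed by letting each rounding decision choose an entire admissible tuple at once, so the clusters cannot be solved independently (a per‑cluster solution need not have its coordinate‑wise tuples lie in the $R_j$) but the joint program handles them together.

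For the formulation, introduce for every coordinate $j\in\{1,\dots,m\}$ and every tuple $t\in R_j$ an indicator $y_{j,t}$ with $\sum_{t\in R_j}y_{j,t}=1$, encoding the event $(\bfc_1[j],\dots,\bfc_k[j])=t$. Then for a cluster index $\ell$ and $\bfx\in X_\ell$ we have $\hdist(\bfx,\bfc_\ell)=\sum_{j=1}^{m}\sum_{t\in R_j:\,t_\ell\neq\bfx[j]}y_{j,t}$, so the instance becomes: minimize $d$ subject to this quantity being at most $d$ for every $\ell$ and every $\bfx\in X_\ell$, over $y\in\{0,1\}$. I would relax to $y\in[0,1]$, solve the LP to obtain a fractional optimum $\bar y$ of value $\bar d\le\opt(J)$, and round by independently drawing, at each coordinate $j$, a tuple $t\in R_j$ with probability $\bar y_{j,t}$. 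Every draw fixes all $k$ centers at $j$ and automatically satisfies $R_j$, so the only price of the coupling is a per‑coordinate alphabet of size $|R_j|\le 2^k$ and a constraint count of at most $kn$.

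Under this rounding, for a fixed pair $(\ell,\bfx)$ the value $\hdist(\bfx,\bfc_\ell)$ is a sum of independent $\{0,1\}$ variables of mean at most $\bar d\le\opt(J)$, so a Chernoff/Hoeffding bound gives deviation $\OO(\sqrt{q\log n})$, where $q$ is the number of coordinates that are actually rounded; a union bound over the $\le kn$ pairs then requires this to be below $\varepsilon\,\opt(J)$ with failure probability at most $n^{-2}$. This succeeds precisely in the \emph{large‑optimum} regime, i.e. when $\opt(J)$ dominates $\sqrt{q\log n}/\varepsilon$. To force an instance into this regime I would mimic the Li--Ma--Wang reduction: enumerate a bounded set of input vectors (this enumeration, of size $n^{\OO((k/\varepsilon)^4)}$, is what produces the stated running time); within each cluster, declare a coordinate \emph{clean} if the selected vectors of that cluster all agree there, and \emph{dirty} otherwise; fix the centers on clean coordinates to the forced admissible values and round only on the dirty ones, so that $q$ equals the number of dirty coordinates. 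A structural lemma should guarantee that, for the correctly guessed vectors, the number of dirty coordinates is $\OO(\,\cdot\,\opt(J))$ and that some optimal solution disagrees with the fixed clean values on only $\OO(\varepsilon\,\opt(J))$ coordinates, so that fixing them costs merely a $(1+\OO(\varepsilon))$ factor. When $\opt(J)$ is too small for the concentration to bite, the dirty coordinates are correspondingly few, and I would instead brute‑force the optimal tuple directly on them; balancing these two regimes is what makes the enumeration polynomial for fixed $\varepsilon$ and $k$.

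The hard part, which I expect to be the main obstacle, is proving the structural lemma in the presence of the relations. In ordinary \textsc{Closest String} each center is free, but here fixing a clean coordinate must select a tuple of $R_j$ that is simultaneously consistent with the selected vectors of \emph{all} clusters, and the greedy/guessed selection must control both the number of dirty coordinates and the clean‑coordinate error of \emph{every} optimal center at once. Making the greedy selection, the threshold separating the rounding and brute‑force regimes, and the union bound over the $kn$ constraints fit together so as to yield a $(1+\varepsilon)$‑approximation with success probability at least $1-n^{-2}$ and running time $m^{\OO(1)}n^{\OO((k/\varepsilon)^4)}$ is where the real work lies.
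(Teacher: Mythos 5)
Your proposal follows essentially the same route as the paper (the IP/LP with one variable $y_{j,t}$ per coordinate--tuple pair, randomized rounding of the relaxation, guessing a bounded set of representative vectors per cluster to fix the agreement coordinates, and a brute-force fallback when concentration does not apply), but it stops exactly where the paper's real work begins: the structural lemma you defer to (``A structural lemma should guarantee\dots'', ``where the real work lies'') is the core of the argument, and without it the proposal is an outline rather than a proof. The paper supplies it in Lemma~\ref{lemma:ragree} and Lemma~\ref{lem:firstphase}. The key observation you are missing is that the relations do \emph{not} complicate the per-cluster selection at all: Lemma~\ref{lemma:ragree} is the Li--Ma--Wang argument (their Claim~2.2) applied to each cluster $X_i$ with its own optimal center $\bfc^*_i$ separately, yielding $r$ vectors $\bfx^{(1)}_i,\ldots,\bfx^{(r)}_i$ such that on \emph{any} subset of their agreement positions, replacing $\bfc^*_i$ by $\bfx^{(1)}_i$ costs every $\bfx\in X_i$ at most $\frac{1}{r-1}\opt(J)$ extra. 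The cross-cluster coordination you flag as the main obstacle is then handled trivially: the coordinates $j$ where the tuple $(\bfx^{(1)}_1[j],\ldots,\bfx^{(1)}_k[j])$ fails to lie in $R_j$ are simply removed from $Q$ and treated as dirty. The counting still goes through because the optimal tuple $(\bfc^*_1[j],\ldots,\bfc^*_k[j])$ does lie in $R_j$, so at such a coordinate some guessed vector must disagree with its optimal center; since each of the $rk$ guessed vectors is within distance $\opt(J)$ of its center, $|\overline{Q}|\le rk\cdot\opt(J)$, which is exactly the large-optimum guarantee needed for the rounding step (Lemma~\ref{lemma:largeopt}). With $r=1+4/\epsilon$ and $\delta=\Theta(\epsilon/k)$ the clean-coordinate error $\frac{1}{r-1}\opt(J)$ and the rounding error $\delta\cdot rk\cdot\opt(J)$ combine to a $(1+\epsilon)$ factor.

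A second point your write-up leaves implicit and which must be made explicit for the argument to be correct: because the objective is a \emph{maximum} of total distances, the residual problem on $\overline{Q}$ cannot be solved in isolation. Each vector $\bfx\in X_i$ has already incurred $d_{\bfx}=\hdist^{Q}(\bfx,\bfx^{(1)}_i)$ on the fixed coordinates, and these offsets differ across vectors, so the LP constraints on $\overline{Q}$ must read $\hdist(\bfx|_{\overline{Q}},\bfc_i)\le d-d_{\bfx}$; the paper formalizes this as the auxiliary problem \rpartitioncenterstar and carries the offsets through both the LP and the Chernoff bound. Finally, the fallback regime is cleanest when keyed on the number of residual coordinates rather than on $\opt$ directly: if $|\overline{Q}|<9c^2\log n/\delta^2$ (with $c=rk$) one enumerates all $2^{k|\overline{Q}|}=n^{\OO(c^2k/\delta^2)}$ assignments exactly, and otherwise $\opt\ge|\overline{Q}|/c$ makes the additive Chernoff deviation $\delta|\overline{Q}|/c\le\delta\,\opt$ negligible; this is what your ``few dirty coordinates'' regime amounts to, and it is what produces the stated $m^{\OO(1)}n^{\OO((k/\epsilon)^4)}$ running time together with the $n^{\OO(rk)}$ guessing.
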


Towards the proof of Lemma~\ref{lem:partkclustering}, we  encode \rpartitioncenter\ using an  Integer programming (IP) formulation (see \eqref{eqn:obtprob} in Section~\ref{sec:lp}).  We show that the randomized rounding using the solution of the linear programming relaxation of this IP provides a good approximation if the optimum value is large.  Here we follow the approach similar to the  one used by    Li, Ma, and Wang
in \cite{LiMW02} to solve \textsc{Closest String}. We prove that there exist $Y_1\subseteq X_1,\ldots,Y_k\subseteq X_k$, each of size $r=1+\frac{4}{\epsilon}$, with the following property. Let $Q$ be the set of positions in $\{1,\ldots,m\}$ such that for each $i\in \{1,\ldots,k\}$ and $j\in Q$, all the vectors in $Y_i$ agree at the position $j$, and for each $j \in Q$, $(\bfy_1[j],\ldots,\bfy_k[j])\in R_j$, where 
$\bfy_i\in Y_i$ for all $i\in \{1,\ldots,k\}$. Then, for any  solution of $J$ such that for each $j\in Q$ the entries at the position $j$ coincide with $(\bfy_1[j],\ldots,\bfy_k[j])$, the cost of this solution restricted to $Q$ deviates from the cost of an optimal solution restricted to $Q$ by at most $\frac{1}{r-1}\opt(J)$. Moreover, the subproblem of $J$ restricted to $\{1,\ldots,m\}\setminus Q$ has large optimum value and we could use linear programming to solve the subproblem. Lemma~\ref{lem:partkclustering} is proved in Section~\ref{sec:lp}. 

\paragraph*{Putting together.} 
Next we explain how to prove Theorem~\ref{thm:kclustering} using Lemmata~\ref{lem:partitioncreation} and \ref{lem:partkclustering}.  Let $J=(X,k,\RR)$ be the input instance of \rcenter\ and 
$0<\varepsilon<1$ be the given error parameter. Let $\beta=\frac{\varepsilon}{8}$. Since $\varepsilon<1$, $\beta<\frac{1}{4}$. Now, we apply Lemma~\ref{lem:partitioncreation}  on $J$, $\beta$, and $\gamma=2$. As a result, we get a collection ${\cal I}$ of instances of \rpartitioncenter\ such that each instance in ${\cal I}$ is of the form $(k,X=X_1\uplus\ldots \uplus X_k,\RR)$, and there exists $J'\in {\cal I}$ such that $\opt(J')\leq (1+4\beta)\opt(J)$ with probability at least $1-n^{-2}$. From now on, we assume that this event happened.  Next, for each instance $\widehat{J}\in {\cal I}$, we apply Lemma~\ref{lem:partkclustering} with the error parameter $\beta$, and output the best solution among the solutions produced. Let $J'\in {\cal I}$ be the instance 
such that $\opt(J')\leq (1+4\beta)\opt(J)\leq (1+\frac{\varepsilon}{2})\opt(J)$.  Any solution to $\widehat{J}\in {\cal I}$ of cost $d$,  is also a solution to $J$ of cost at most $d$. Therefore, because of Lemmas~\ref{lem:partitioncreation} and \ref{lem:partkclustering}, our algorithm outputs a solution of $J$ with cost at most $(1+\beta)\opt(J')=(1+\frac{\varepsilon}{8})(1+\frac{\varepsilon}{2})\opt(J)\leq (1+\varepsilon)\opt(J)$ with probability at least $1-2n^{-2}$, since both Lemmas~\ref{lem:partitioncreation} and \ref{lem:partkclustering} have the success probability of at least $1 - n^{-2}$. The running time of the algorithm follows from Lemmata~\ref{lem:partitioncreation} and \ref{lem:partkclustering}.

As Theorem~\ref{thm:kclustering} is already proved using  Lemmas~\ref{lem:partitioncreation} and \ref{lem:partkclustering}, the rest of the paper is devoted to the proofs of Lemmata~\ref{lem:partitioncreation} and \ref{lem:partkclustering}, and to the examples of the expressive power of \rcenter, including \LRAGF. In Sections~\ref{sec:dimred} and \ref{sec:lp}, we prove Lemmata~\ref{lem:partitioncreation} and \ref{lem:partkclustering}, respectively. In Section~\ref{sec:applications}, we give applications of Theorem~\ref{thm:kclustering}.

\section{Proof of Lemma~\ref{lem:partitioncreation}}
\label{sec:dimred}

In this section we prove Lemma~\ref{lem:partitioncreation}. The main idea is to map the given instance to a low-dimensional space while approximately preserving distances, then try all possible tuples of centers in the low-dimensional space, and construct an instance of \rpartitioncenter by taking the optimal partition of the images with respect to a fixed tuple of centers back to the original vectors.

To implement the mapping, we employ the notion of $(\delta,\ell,h)$-distorted maps, introduced by Ostrovsky and Rabani \cite{OstrovskyR02}. Intuitively, a $(\delta, \ell, h)$-distorted map approximately preserves distances between $\ell$ and $h$, does not shrink distances larger than $h$ too much, and does not expand  distances smaller than $\ell$ too much. In what follows we make the definitions formal.

A metric space is a pair $(P, d)$ where $P$ is a set (whose elements are called
points), and $d$ is a distance function $d : P \times P \rightarrow {\mathbb R}$ (called a metric), such that
for every $p_1,p_2,p_3 \in P$ the following conditions hold:  
$(i)$ $d(p_1,p_2 )\geq 0$,
$(ii)$ $d(p_1,p_2)=d(p_2,p_1)$,
$(iii)$ $d(p_1,p_2) =0$ if and only if  $p_1 = p_2$, and 
$(iv)$ $d(p_1,p_2 )+d(p_2, p_3) \geq d(p_1,p_3)$. 
Condition $(iv)$ is called the triangle inequality. 
The pair $(\{0,1\}^m,\hdist)$, binary vectors of lentgh $m$ and the Hamming distance, is a metric space.


\begin{definition}[\cite{OstrovskyR02}]
\label{def:distortion}
Let $(P,d)$ and $(P',d')$ be two metric spaces. Let $X,Y \subseteq P$. 
Let $\delta,\ell,h$ be such that $\delta > 0$ and $h>\ell\geq 0$. A mapping $\psi : P \rightarrow P'$
is $(\delta,\ell,h)$-distorted on $(X,Y)$  if and only if  there exists $\alpha > 0$ such that for every 
$x \in X$ and $y \in Y$, the following conditions hold. 
\begin{enumerate}
 \item If $d(x,y) <\ell$, then $d(\psi(x),\psi(y)) < (1 + \delta)\alpha \ell$.
 \item If $d(x,y) >h$, then $d(\psi(x),\psi(y)) > (1-\delta)\alpha h$.
 \item  If $\ell \leq d(x, y) \leq h$, then $(1-\delta)\alpha d(x,y) \leq  d(\psi(x), \psi(y)) \leq (1 + \delta)\alpha d(x, y)$.
 \end{enumerate}
 If $X=Y$, then we say that $\psi$ is $(\delta,\ell,h)$-distorted on $X$.
\end{definition}


For any $r,r'\in {\mathbb N}$ and $\varepsilon>0$, ${\cal A}_{r,r'}(\varepsilon)$ denotes a distribution over $r'\times r$ binary matrices $M\in \{0,1\}^{r'\times r}$, where entries are independent, identically distributed, random $0/1$ variables with $\Pr[1]=\varepsilon$.   

\begin{proposition}[\cite{OstrovskyR02}]
\label{prop:distortion0}
Let $m,\ell\in {\mathbb N}$, and let $X\subseteq \{0,1\}^m$ be a set of $n$ vectors.  
For every $0 < \epsilon \leq 1/2$, there exists a mapping $\phi : X \rightarrow \{0,1\}^{m'}$, where $m' = \OO(\log n/\epsilon^4)$, which is $(\epsilon, \ell/4, \ell/2\epsilon)$-distorted on $X$ (with respect to the Hamming distance in both spaces). 
More precisely, for every $\gamma > 0$ there exists $\lambda > 0$, such that, setting $m' = \lambda \log n/\epsilon^4$, the linear map $\bfx  \mapsto A\bfx$, where $A$ is a random matrix drawn from ${\cal A}_{m,m'}(\epsilon^2/\ell)$, is  $(\epsilon, \ell/4, \ell/2\epsilon)$-distorted on $X$ with probability at least $1-n^{-\gamma}$. 
\end{proposition}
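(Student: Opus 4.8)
The plan is to analyze the random linear map $\bfx\mapsto A\bfx$ over $\GF$ directly, with $A\sim{\cal A}_{m,m'}(p)$ for $p=\epsilon^2/\ell$, and to show that with the single scaling factor $\alpha=m'p$ and distortion $\delta=\epsilon$ it satisfies the three conditions of Definition~\ref{def:distortion} for every pair $\bfx,\bfy\in X$ simultaneously, after a union bound. The first reduction is to a single scalar. Fix $\bfx,\bfy\in X$, set $\mathbf{z}=\bfx\oplus\bfy$ and $d=\hdist(\bfx,\bfy)=|\mathbf{z}|$. Over $\GF$ we have $A\bfx\oplus A\bfy=A\mathbf{z}$, so $\hdist(A\bfx,A\bfy)=|A\mathbf{z}|$. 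Each coordinate $(A\mathbf{z})_i$ is the XOR of the $d$ independent $\mathrm{Bernoulli}(p)$ entries of row $i$ lying in the support of $\mathbf{z}$; using $\mathbf{E}[(-1)^{B}]=1-2p$ for $B\sim\mathrm{Bernoulli}(p)$, this XOR equals $1$ with probability $q_d=\tfrac12\bigl(1-(1-2p)^d\bigr)$, and the coordinates are independent because the rows of $A$ are. Hence $\hdist(A\bfx,A\bfy)\sim\mathrm{Bin}(m',q_d)$ with mean $m'q_d$.

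The heart of the argument is the two-sided estimate $pd(1-pd)\le q_d\le pd$, obtained from the Bernoulli inequality $(1-2p)^d\ge1-2pd$ and the second-order bound $(1-2p)^d\le 1-2pd+2d^2p^2$. Since $pd\le\epsilon/2$ whenever $d\le h:=\ell/(2\epsilon)$, this shows $m'q_d\in[(1-\tfrac\epsilon2)\alpha d,\,\alpha d]$ throughout the near-linear range, so the expected image distance tracks $\alpha d$ up to a factor $1\pm O(\epsilon)$. I would then treat the three regimes of Definition~\ref{def:distortion} via Chernoff bounds on $\mathrm{Bin}(m',q_d)$. For $\ell/4\le d\le h$ the mean is $\Omega(m'\epsilon^2)$, i.e. $\Omega(\lambda\log n/\epsilon^2)$, so a multiplicative deviation of order $\epsilon$ concentrates the image distance within $(1\pm\epsilon)\alpha d$. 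For $d>h$, monotonicity of $q_d$ in $d$ gives mean $\ge(1-\tfrac\epsilon2)\alpha h$, and a lower-tail bound keeps the image above $(1-\epsilon)\alpha h$. For $d<\ell/4$, an upper-tail bound against the fixed threshold $(1+\epsilon)\alpha(\ell/4)$ keeps the image below it.

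The step I expect to be most delicate is the small-distance regime $d<\ell/4$. There the mean $m'q_d$ may be as small as $0$ (when $d=0$), or otherwise far below the fixed threshold, so the plain multiplicative Chernoff form does not apply directly. I would instead bound $\Pr[\mathrm{Bin}(m',q_d)\ge(1+\epsilon)\alpha(\ell/4)]$ by splitting on whether the mean $\mu$ is within a constant factor of the threshold $a=(1+\epsilon)\alpha(\ell/4)$ (where the multiplicative upper tail applies) or much smaller than it (where the Poisson-type tail $(e\mu/a)^{a}$ applies), and checking in both subcases that the exponent remains $\Omega(\lambda\log n)$; the worst case is when $\mu$ sits just below $a$, which is exactly where one must verify the gap $a-\mu$ stays a constant fraction of $\alpha(\ell/4)=m'\epsilon^2/4$.

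Finally, in every regime the exponent in the Chernoff bound is $\Omega(\lambda\log n)$, so choosing $\lambda=\lambda(\gamma)$ large enough makes each pair fail with probability at most $n^{-(\gamma+2)}$; a union bound over the fewer than $n^2$ pairs of $X$ then yields overall failure probability at most $n^{-\gamma}$. On the complementary event the map $\bfx\mapsto A\bfx$ is $(\epsilon,\ell/4,\ell/2\epsilon)$-distorted on $X$ with the single witness $\alpha=m'p$, and by construction $m'=\lambda\log n/\epsilon^4=\OO(\log n/\epsilon^4)$, which is the claimed statement.
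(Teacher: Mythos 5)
The paper does not prove this proposition at all: it is imported verbatim from Ostrovsky and Rabani \cite{OstrovskyR02} and used as a black box, so there is no internal proof to compare against. Your argument is a correct self-contained reconstruction in the binary/\GF{} setting, and the key steps check out. The reduction to a single scalar via $A\bfx\oplus A\bfy=A(\bfx\oplus\bfy)$ and the formula $q_d=\tfrac12\bigl(1-(1-2p)^d\bigr)$ (from $\expect[(-1)^B]=1-2p$) are right, and the rows of $A$ being independent makes the image distance exactly $\mathrm{Bin}(m',q_d)$. The two-sided estimate $pd(1-pd)\le q_d\le pd$ is valid (Bernoulli and truncated-binomial bounds, using $p=\epsilon^2/\ell\le 1/4$, which also gives the monotonicity of $q_d$ you invoke for $d>h$), and your bookkeeping explains exactly where the $\epsilon^{-4}$ in $m'$ comes from: the tightest regime is $d$ near $\ell/4$, where the mean is $\Theta(m'\epsilon^2)$ and a relative deviation of $\Theta(\epsilon)$ costs a Chernoff exponent $\Theta(m'\epsilon^4)=\Theta(\lambda\log n)$. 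You also correctly flag the genuinely delicate point, the regime $d<\ell/4$ against the fixed threshold $(1+\epsilon)\alpha\ell/4$, and the split you propose (multiplicative tail when the mean is comparable to the threshold, Poisson-type tail $(e\mu/a)^a$ when it is much smaller, with the worst case at $\mu$ just below the threshold) does close it; a fully written proof would need those two subcases worked out explicitly, plus the trivial $d=0$ case. The uniform choice $\alpha=m'p$ for all pairs and the union bound over fewer than $n^2$ pairs with $\lambda=\lambda(\gamma)$ large enough complete the statement as claimed. In short: where the paper cites, you prove, essentially along the lines of the original Ostrovsky--Rabani argument; the only thing your write-up leaves as a sketch is the small-distance tail computation you yourself identify.
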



Now we are ready to prove Lemma~\ref{lem:partitioncreation}.  We restate it for convenience.

\lempartcreation*

\begin{proof}
    Without loss of generality, we may assume $\opt(J) > 0$. If $\opt(J) = 0$, there are at most $k$ distinct vectors in $X$, and we trivially construct a single instance of \rpartitioncenter by grouping equal vectors together.

Let $n=\vert X\vert$ and $n'=n+k$. 
Let $\lambda=\lambda(\gamma)$ be the constant mentioned in Proposition~\ref{prop:distortion0}, and $m' = \lambda \log n'/\epsilon^4$. Then, for each $\ell\in [m]$\footnote{For an integer $n\in {\mathbb N}$, we use $[n]$ as a shorthand for $\{1,\ldots,n\}$.}, we construct 
the collection ${\cal I}_{\ell}$ of $n^{\OO(k/\epsilon^4)}$ \rpartitioncenter instances
as follows.
\begin{itemize}
    \item Start with ${\cal I}_{\ell}:=\emptyset$.
\item Randomly choose a matrix $A^{\ell}$ from the distribution  ${\cal A}_{m,m'}(\epsilon^2/\ell)$.
\item For each choice of $k$ vectors $\bfc_1',\ldots,\bfc_k'\in \{0,1\}^{m'}$, construct a partition $X_1\uplus\ldots\uplus X_k$ of $X$ 
such that for each $\bfx\in X_i$, $\bfc_i'$ is one of the closest vectors to $A^{\ell}\bfx$ among $C'=\{\bfc_1',\ldots,\bfc_k'\}$.  Then, add $(k,X=X^i_1\uplus\ldots X^i_k,\RR)$ to ${\cal I}_{\ell}$. 
\end{itemize}

Finally, our algorithm  outputs ${\cal I}=\bigcup_{\ell\in [m]} {\cal I}_{\ell}$ as the required collection of \rpartitioncenter instances. 
Notice that for any $\ell\in [m]$, $\vert {\cal I}_{\ell}\vert=2^{m'k}=n^{\OO(k/\epsilon^4)}$. This implies that the cardinality of $\cal I$ is upper bounded by $m\cdot n^{\OO(k/\epsilon^4)}$, and the construction of ${\cal I}_{\ell}$ takes time $m\cdot n^{\OO(k/\epsilon^4)}$. Thus, the total running time of the algorithm is $m^2\cdot n^{\OO(k/\epsilon^4)}$.

Next, we prove the correctness of the algorithm. Let $\ell=\opt(J)$ and $C=(\bfc_1,\ldots,\bfc_k)$ be an optimum solution of $J$. 
Let $Y_1,\ldots,Y_k$ be the clusters corresponding to $C$. Consider the step in the algorithm where we constructed ${\cal I}_{\ell}$. 
By Proposition~\ref{prop:distortion0}, the map $\psi \colon \bfx  \mapsto A^{\ell}\bfx$ is $(\epsilon, \ell/4, \ell/2\epsilon)$-distorted on $X\cup C$ with probability at least $1-n^{-\gamma}$. 
In the rest of the proof, we assume that this event happened. 
Let $\bfc_1'=A^{\ell}\bfc_1,\ldots,\bfc_k'=A^{\ell}\bfc_k$. 
Consider the \rpartitioncenter instance constructed for the choice of vectors $\bfc_1',\ldots,\bfc'_k$.  That is, let $X_1,\ldots,X_k$ be the partition of $X$ such that for each $\bfx\in X_i$,  $\bfc_i'$ is one of the closest vector to $A^{\ell}\bfx$ from $C'=\{\bfc_1',\ldots,\bfc_k'\}$. 
Let $J'$ be the instance $(k,X=X^i_1\uplus\ldots X^i_k,\RR)$ of \rpartitioncenter. 

Now, we claim that $C$ is a solution to $J'$ with cost at most $(1+4\epsilon)\ell=(1+4\epsilon)\opt(J)$. Since $C$ satisfies $\RR$, $C$ is a solution of $J'$. 
To prove $\opt(J')\leq (1+4\epsilon)\ell$, it is enough to prove that for each $i\in [k]$ and $\bfx\in X_i$, $\hdist(\bfx,\bfc_i)\leq (1+4\epsilon)\ell$. Fix an index $i\in [k]$ and $\bfx\in X_i$. Suppose $\bfx\in Y_i$. 
Since $C$ is an optimum solution of $J$ with corresponding clusters $Y_1,\ldots Y_k$, we have that $\hdist(\bfy,\bfc_i)\leq \ell$ for all $\bfy\in Y_i\cap X_i$. Thus,  $\hdist(\bfx,\bfc_i)\leq \ell$. So, now consider the case $\bfx\in Y_j$ for some $j\neq i$. Notice that if $\hdist(\bfx,\bfc_i)\leq \ell$, then we are done. We have the following two subcases. 

\paragraph*{Case 1: $\hdist(\bfx,\bfc_i)\leq \frac{\ell}{2\epsilon}$.} 

We know that the map $\psi \colon \bfx  \mapsto A^{\ell}\bfx$ is $(\epsilon, \ell/4, \ell/2\epsilon)$-distorted on $X\cup C$,  and let $\alpha>0$ be the number such that conditions of Definition~\ref{def:distortion} hold. 
Since $\bfx\in X_i$, we have that $(a)$ $\hdist(\psi(\bfx),\psi(\bfc_i))\leq \hdist(\psi(\bfx),\psi(\bfc_j))$.  Since $\hdist(\bfx,\bfc_j)\leq \ell$ (because $\bfx\in Y_j$) and $\psi$ is $(\epsilon, \ell/4, \ell/2\epsilon)$-distorted on  $X\cup C$, we have that 
$(b)$ $\hdist(\psi(\bfx),\psi(\bfc_j))\leq (1+\epsilon)\alpha \ell$. Since $\ell<\hdist(\bfx,\bfc_i)\leq \frac{\ell}{2\epsilon}$, and $\psi$ is $(\epsilon, \ell/4, \ell/2\epsilon)$-distorted on  $X\cup C$, we have that $(c)$ $(1-\epsilon)\alpha \hdist(\bfx,\bfc_i)\leq \hdist(\psi(\bfx),\psi(\bfc_i))$. The statements $(a)$, $(b)$, and $(c)$ imply that 
\[
\hdist(\bfx,\bfc_i) \leq \frac{1+\epsilon}{1-\epsilon}\ell \leq (1+4\epsilon)\ell, 
\]
where the last inequality holds since $\epsilon \leq 1/4$. 
\paragraph*{Case 2: $\hdist(\bfx,\bfc_i)> \frac{\ell}{2\epsilon}$.} We prove that this case is impossible by showing a contradiction. Since $\epsilon \leq 1/4$, in this case, we have that $\hdist(\bfx,\bfc_i)>2\ell$. Since $\psi$ is $(\epsilon, \ell/4, \ell/2\epsilon)$-distorted on  $X\cup C$, $\hdist(\bfx,\bfc_i)>2\ell$, and 
$\hdist(\bfx,\bfc_j)\leq \ell$, we have that 
\begin{eqnarray*}
    (1-\epsilon) \alpha \cdot 2\ell \leq \hdist(\psi(\bfx),\psi(\bfc_i))\leq \hdist(\psi(\bfx),\psi(\bfc_j))\leq (1+\epsilon) \alpha \cdot \ell.
\end{eqnarray*}
Then $2(1-\epsilon)\leq (1+\epsilon)$ and thus $\epsilon\geq 1/3$, which contradicts the assumption that $\epsilon \le 1/4$.  

This completes the proof of the lemma. 
\end{proof}


\section{Proof of Lemma~\ref{lem:partkclustering}}
\label{sec:lp}

For a set of positions $P \subset [m]$, let us define the Hamming distance restricted to $P$ by
\[\hdist^P(\bfx, \bfy) = \sum_{i \in P} |x_i - y_i|.\]


We use the following lemma  in our proof.  


\begin{lemma}
\label{lemma:ragree}
 Let $Y = \{\bfy_1, \cdots, \bfy_l\} \subset \{0, 1\}^m$ be a set of vectors and $\bfc^*\in \{0,1\}^m$ be a vector. 
Let $d^*=\Hdist(Y,\{\bfc^*\})=\max_{\bfy\in Y} \hdist(\bfy,\bfc^*)$. For any $r \in \mathbb{N}$, $r > 2$, there exist indices $i_1$, \ldots, $i_r$ such that for any $\bfx \in Y$
    \[\hdist^P(\bfx, \bfy_{i_1}) - \hdist^P(\bfx, \bfc^*) \le \frac{1}{r - 1}d^*,\]
    where $P$ is any subset of $Q_{i_1, \ldots, i_r}$ and $Q_{i_1, \ldots, i_r}$ is the set of positions where all of $\bfy_{i_1}, \ldots, \bfy_{i_r}$ coincide (i.e., $Q_{i_1, \ldots, i_r}=\{j\in[m] \colon \bfy_{i_1}[j]=\bfy_{i_2}[j]=\ldots=\bfy_{i_r}[j]\}$).
\end{lemma}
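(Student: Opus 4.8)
The plan is to select the indices $i_1,\ldots,i_r$ greedily so that, on the agreement set $Q:=Q_{i_1,\ldots,i_r}$, the first chosen vector $\bfy_{i_1}$ disagrees with $\bfc^*$ only on positions that essentially every $\bfx\in Y$ also gets wrong. First I would simplify the quantity to be bounded. Since $P\subseteq Q$ and on $Q$ all chosen vectors coincide with $\bfy_{i_1}$, a position $j\in P$ contributes to $\hdist^P(\bfx,\bfy_{i_1})-\hdist^P(\bfx,\bfc^*)$ only when $\bfy_{i_1}[j]\ne\bfc^*[j]$; among those positions the contribution is $+1$ exactly when $\bfx[j]=\bfc^*[j]$ and $-1$ exactly when $\bfx[j]=\bfy_{i_1}[j]$. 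Hence for every $P\subseteq Q$,
\[
\hdist^P(\bfx,\bfy_{i_1})-\hdist^P(\bfx,\bfc^*)\le \bigl|\{\,j\in Q: \bfy_{i_1}[j]\ne\bfc^*[j]\text{ and }\bfx[j]=\bfc^*[j]\,\}\bigr|,
\]
so it suffices to choose the indices so that the right-hand side is at most $\tfrac{1}{r-1}d^*$ for every $\bfx\in Y$.

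To achieve this I would maintain, after choosing $i_1,\ldots,i_s$, the set $W_s:=\{\,j\in Q_{i_1,\ldots,i_s}: \bfy_{i_1}[j]\ne\bfc^*[j]\,\}$ of still-wrong agreement positions, starting from an arbitrary $i_1$, for which $Q_{i_1}=[m]$ and $|W_1|=\hdist(\bfy_{i_1},\bfc^*)\le d^*$. As long as some $\bfx\in Y$ violates the target, i.e. $|\{\,j\in W_s:\bfx[j]=\bfc^*[j]\,\}|>\tfrac{1}{r-1}d^*$, I set $i_{s+1}$ to the index of such an $\bfx$ and repeat. The key point is that adding $\bfx$ to the agreement set keeps exactly the positions of $Q_s$ where $\bfx[j]=\bfy_{i_1}[j]$, so it removes from $W_s$ precisely the positions $j\in W_s$ with $\bfx[j]\ne\bfy_{i_1}[j]$, which in the binary setting are exactly the positions with $\bfx[j]=\bfc^*[j]$. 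Therefore $|W_{s+1}|<|W_s|-\tfrac{1}{r-1}d^*$ at every greedy step.

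For termination I would use the potential $|W_s|$: since $|W_1|\le d^*$ and each step strictly deletes more than $\tfrac{1}{r-1}d^*$ positions from a nonnegative set, after $t$ greedy steps $|W_{1+t}|<d^*\bigl(1-\tfrac{t}{r-1}\bigr)$, which is negative once $t=r-1$. Hence at most $r-2$ greedy steps occur and the process halts after selecting at most $r-1$ indices; when it halts, no $\bfx$ violates the bound, which is exactly what is required. To obtain a tuple of exactly $r$ indices I would pad with repetitions of $i_1$ (or, if distinct indices are wanted and $l\ge r$, with arbitrary unused ones): repetitions leave $Q$ and $W$ unchanged, and adding fresh indices only shrinks them, so the established bound is preserved.

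The step I expect to require the most care is the reduction in the first paragraph, namely correctly tracking the sign of each position's contribution and verifying that the bound must hold for every subset $P\subseteq Q$ rather than merely for $P=Q$, together with the bookkeeping that the per-step decrement $\tfrac{1}{r-1}d^*$ forces termination within $r-1$ rounds. The greedy move itself is straightforward once one observes that a violating $\bfx$, upon being added to the agreement set, deletes its own offending positions from $Q$ and hence from $W$.
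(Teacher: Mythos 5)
Your proof is correct. The opening reduction is exactly the paper's: over a binary alphabet, the only positions of $P\subseteq Q$ that contribute positively to $\hdist^P(\bfx,\bfy_{i_1})-\hdist^P(\bfx,\bfc^*)$ are those where $\bfy_{i_1}$ differs from $\bfc^*$ while $\bfx$ agrees with $\bfc^*$, so it suffices to bound the number of such positions inside $Q$ by $\tfrac{1}{r-1}d^*$. After that, however, the mechanisms differ. The paper follows Li, Ma and Wang's Claim~2.2: it defines $p_{i_1,\ldots,i_s}$ (your $|W_s|$) and the minima $\rho_s=\min p_{i_1,\ldots,i_s}/d^*$ over all $s$-tuples, telescopes $(\rho_2-\rho_3)+\cdots+(\rho_r-\rho_{r+1})\le\rho_2\le 1$ to find one $s$ with gap at most $\tfrac{1}{r-1}$, and takes a tuple achieving $\rho_s$, padded by arbitrary further indices. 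You instead run a greedy process on the same potential: start from an arbitrary $i_1$, adjoin any violating vector, and observe (correctly, using binarity) that its offending positions are exactly the ones deleted from $W_s$, so each step removes more than $\tfrac{d^*}{r-1}$ positions from a set of size at most $d^*$, forcing termination after at most $r-2$ steps; at termination no vector violates the bound, and padding by repeats or fresh indices only shrinks $Q$ and $W$. The two arguments hinge on the same monotone quantity, but yours is a self-contained, constructive replacement for the cited claim, and it is slightly more careful on a minor point: your padding works even when $|Y|<r$, whereas the paper's "pick $r-s$ further indices from $[\ell]\setminus\{i_1,\ldots,i_s\}$" implicitly assumes enough vectors are available. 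What the paper's averaging formulation buys in exchange is a statement closer to the original Li--Ma--Wang claim, making the citation transparent.
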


\begin{proof}
%
%

For a vector $\bfx=\bfy_{\ell'}\in Y$ and $P\subseteq Q_{i_1, \ldots, i_r}$, let 
\begin{eqnarray*}
J_P(\ell') &=& \left\{j \in P \colon \bfy_{i_1}[j] \ne \bfx[j] \text{ and } \bfy_{i_1}[j] \ne \bfc^*[j]\right\}, \mbox{ and } \\
J(\ell') &=& \left\{j \in Q_{i_1, \ldots, i_r} \colon \bfy_{i_1}[j] \ne \bfx[j] \text{ and } \bfy_{i_1}[j] \ne \bfc^*[j]\right\}.
\end{eqnarray*}
To prove the lemma it is enough to prove that  $\vert J_P(\ell')\vert \leq \frac{1}{r - 1}d^*$. Also, since $J_P(\ell')\subseteq J(\ell')$, 
to prove the lemma, it is enough to prove that $\vert J(\ell')\vert \leq \frac{1}{r - 1}d^*$. 
Recall that for any  $s\in [\ell]$ and $1\leq i_1, \ldots, i_s\leq \ell$,   $Q_{i_1, \ldots, i_s}$ is the set of positions where all of $\bfy_{i_1}, \ldots, \bfy_{i_s}$ coincide. For any $2\leq s \le r + 1$ and $1\leq i_1, \ldots, i_s\leq \ell$, let $p_{i_1, \ldots, i_s}$ be the number of mismatches between 
$\bfy_{i_1}$ and $\bfc^*$ at the positions in $Q_{i_1, \ldots, i_s}$. Let 
\[
\rho_s=\min_{1\leq i_1, \ldots, i_s\leq n} \frac{p_{i_1, \ldots, i_s}}{d^*}.
\]
Notice that for any $2\leq s \le r + 1$, $\rho_s\leq 1$. 
\begin{claim}[Claim~2.2~\cite{LiMW02}]\footnote{We remark that  Claim~2.2 in \cite{LiMW02} is stated for a vector $\bfc$ such that $d^*=\Hdist(Y,\{\bfc\})=\min_{\bfc'}\Hdist(Y,\{\bfc'\})$. But the steps of the same proof work in our case as well.}
For any $s$ such that $2\leq s \leq r$, there are indices  $1\leq  i_1,i_2,\ldots,i_r\leq \ell$ such that for any $\bfx=\bfy_{\ell'}\in Y$, $\vert J(\ell')\vert \leq (\rho_s-\rho_{s+1})d^*$. 
\end{claim}
\begin{proof}
Consider indices $1\leq i_1,\ldots,i_s\leq \ell$ such that $p_{i_1, \ldots, i_s}=\rho_s \cdot d^*$. Next arbitrarily pick $r-s$ indices  $i_{s+1},i_{s+2},\ldots,i_r$ from $[\ell]\setminus \{i_1,\ldots,i_s\}$. Next we prove that $i_1,i_2,\ldots,i_r$ are the required set of indices. 
Towards that, fix $\bfx=\bfy_{\ell'}\in Y$, 
\begin{eqnarray}
J(\ell') &=& \vert \left\{j \in Q_{i_1, \ldots, i_r} \colon \bfy_{i_1}[j] \ne \bfx[j] \text{ and } \bfy_{i_1}[j] \ne \bfc^*[j]\right\}\vert \nonumber\\
 &\leq& \vert \left\{j \in Q_{i_1, \ldots, i_s} \colon \bfy_{i_1}[j] \ne \bfx[j] \text{ and } \bfy_{i_1}[j] \ne \bfc^*[j]\right\}\vert  \qquad\quad
(\mbox{Because } Q_{i_1, \ldots, i_r}\subseteq Q_{i_1, \ldots, i_s}) \nonumber\\
&=& \vert \left\{j \in Q_{i_1, \ldots, i_s} \colon \bfy_{i_1}[j] \ne \bfc^*[j]\right\}\setminus \left\{j \in Q_{i_1, \ldots, i_s} \colon \bfy_{i_1}[j] = \bfx[j] \text{ and } \bfy_{i_1}[j] \ne \bfc^*[j]\right\}\vert \nonumber\\
&=& \vert \left\{j \in Q_{i_1, \ldots, i_s} \colon \bfy_{i_1}[j] \ne \bfc^*[j]\right\}\setminus \left\{j \in Q_{i_1, \ldots, i_s,{\ell'}} \colon   \bfy_{i_1}[j] \ne \bfc^*[j]\right\}\vert \quad (\mbox{Because }\bfx=\bfy_{\ell'}) \nonumber\\
&=& \vert \left\{j \in Q_{i_1, \ldots, i_s} \colon \bfy_{i_1}[j] \ne \bfc^*[j]\right\}\vert - \vert \left\{j \in Q_{i_1, \ldots, i_s,{\ell'}} \colon   \bfy_{i_1}[j] \ne \bfc^*[j]\right\}\vert  \label{eqoneof}\\
&=& p_{i_1, \ldots, i_s}-p_{i_1, \ldots, i_s,\ell'} \qquad\qquad\qquad\qquad\qquad\qquad\qquad\qquad\qquad\qquad\qquad (\mbox{By definition})\nonumber\\
&\le&(\rho_s-\rho_{s+1})d^* \label{ineqlityother}
\end{eqnarray}
The equality \eqref{eqoneof} holds since $Q_{i_1, \ldots, i_s}\supseteq Q_{i_1, \ldots, i_s,\ell'}$. The inequality 
\eqref{ineqlityother} holds because $p_{i_1, \ldots, i_s}=\rho_s \cdot d^*$ by the choice of $i_1$, \ldots, $i_s$, and $\rho_{s + 1} d^* \le p_{i_1, \ldots, i_s, \ell'}$ by definition.
This completes the proof of the claim. 
\end{proof}

Notice that 
\(
(\rho_2-\rho_{3})+(\rho_3-\rho_{4})+\ldots+(\rho_r-\rho_{r+1})=(\rho_2-\rho_{r+1})\leq \rho_2\leq 1. 
\)
Thus, one of $(\rho_2-\rho_{3}),(\rho_3-\rho_{4}),\ldots,(\rho_r-\rho_{r+1})$ is at most $1/(r-1)$. This completes the proof of the lemma. 
\end{proof}

Consider the instance $J=(k,X=X^i_1\uplus\ldots X^i_k,\RR)$ of \rpartitioncenter.
Let $C^* = (\bfc^*_1, \cdots, \bfc^*_k)\subset \{0,1\}^m$ be an optimum solution to $J$. 
Let $d_{opt}=\opt(J)=\max_{i \in [k], \bfx \in X_i} \hdist(\bfx, \bfc^*_i)$. 
%
For each $i \in [k]$ and $r\geq 2$, by Lemma~\ref{lemma:ragree}, there exist $r$ elements $\bfx^{(1)}_i$, \ldots, $\bfx^{(r)}_i$ of $X_i$ such that for any $\bfx\in X_i$, 
\begin{equation}
\label{eqn:partialxis}
\hdist^P(\bfx, \bfx^{(1)}_{i}) - \hdist^P(\bfx, \bfc_i^*) \le \frac{1}{r - 1}d_{opt},
\end{equation}
where $P$ is any subset of $Q_i$, and $Q_i$ is the set of coordinates on which $\bfx^{(1)}_i$, \ldots, $\bfx^{(r)}_i$ agree. 
%
Let us denote as $Q$ the intersection of all $Q_i$ from which the positions not satisfying $\mathcal{R}$ are removed. That is, 
\[Q = \left\{j \in \bigcap_{i \in [k]} Q_i  \colon (\bfx^{(1)}_1[j],\bfx^{(1)}_2[j],\ldots,\bfx^{(1)}_k[j]) \in R_j\right\}.\]

Because of \eqref{eqn:partialxis}, there is an approximate solution where the coordinates $j\in Q$ are identified using $\bfx^{(1)}_1,\ldots,\bfx^{(1)}_k$. 
Let $\overline{Q}=[m] \setminus Q$. Now the idea is to solve the problem restricted to $\overline{Q}$ separately, and then complement the solution on $Q$ by the values of $\bfx^{(1)}_i$. We prove that for the `subproblem' restricted on  $\overline{Q}$, the optimum value is {\em large}. Towards that we first prove the following lemma.

\begin{lemma}
\label{lem:firstphase}
Let $J=(k,X=X^i_1\uplus\ldots X^i_k,\RR)$ be an instance of \rpartitioncenter. Let $({\bfc}^*_1,\ldots,{\bfc}^*_k)$ be an optimal solution for $J$, and $r\geq 2$ be an integer. Then, there exist $\{\bfx^{(1)}_1$, \ldots, $\bfx^{(r)}_1\} \subset X_1, \ldots, \{\bfx^{(1)}_k$, \ldots, $\bfx^{(r)}_k\} \subset X_k$  with the following properties. For each $i\in [k]$, let $Q_i$ be the set of coordinates on which $\bfx^{(1)}_i$, \ldots, $\bfx^{(r)}_i$ agree, \(Q = \left\{j \in \bigcap_{i \in [k]} Q_i  \colon (\bfx^{(1)}_1[j],\bfx^{(1)}_2[j],\ldots,\bfx^{(1)}_k[j]) \in R_j\right\}\), and $\overline{Q}=[m] \setminus Q$. 
\begin{itemize}
\item For any  $i\in [k]$ and $\bfx\in X_i$, $\hdist^Q(\bfx, \bfx^{(1)}_{i}) - \hdist^Q(\bfx, \bfc_i^*) \le \frac{1}{r - 1}\opt(J)$, and 
\item  $|\overline{Q}| \le rk \cdot \opt(J)$. 
\end{itemize}
\end{lemma}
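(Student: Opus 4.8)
The plan is to produce the vectors $\bfx_i^{(t)}$ by applying Lemma~\ref{lemma:ragree} separately to each cluster, and then to bound $|\overline{Q}|$ by a coordinate-mismatch counting argument against the optimal centers.

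First I would fix $i \in [k]$ and invoke Lemma~\ref{lemma:ragree} with $Y = X_i$, the center $\bfc^* = \bfc_i^*$, and the given $r$. Since $(\bfc_1^*,\ldots,\bfc_k^*)$ is an optimal solution to $J$, we have $d^* := \max_{\bfy \in X_i}\hdist(\bfy,\bfc_i^*) \le \opt(J)$. The lemma returns $r$ vectors of $X_i$, which I rename $\bfx_i^{(1)},\ldots,\bfx_i^{(r)}$, together with the guarantee that for every subset $P$ of their agreement set $Q_i$ and every $\bfx \in X_i$,
\[
\hdist^P(\bfx,\bfx_i^{(1)}) - \hdist^P(\bfx,\bfc_i^*) \le \frac{1}{r-1}\,d^* \le \frac{1}{r-1}\,\opt(J).
\]
Because the set $Q$ defined in the statement is by construction contained in every $Q_i$, taking $P = Q$ yields the first bullet immediately.

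Next I would establish the size bound on $\overline{Q} = [m]\setminus Q$. The key observation is that every position $j \in \overline{Q}$ forces a Hamming mismatch between some $\bfx_i^{(t)}$ and the corresponding optimal center $\bfc_i^*$ at coordinate $j$, and there are exactly two reasons a position can fail to belong to $Q$. If $j \notin \bigcap_i Q_i$, then for some $i$ the vectors $\bfx_i^{(1)},\ldots,\bfx_i^{(r)}$ do not all agree at $j$; being binary, both values $0$ and $1$ occur among them, so whatever $\bfc_i^*[j]$ is, at least one $\bfx_i^{(t)}[j] \neq \bfc_i^*[j]$. Otherwise $j \in \bigcap_i Q_i$ but $(\bfx_1^{(1)}[j],\ldots,\bfx_k^{(1)}[j]) \notin R_j$; here I would use that the optimal solution satisfies $\RR$, i.e. $(\bfc_1^*[j],\ldots,\bfc_k^*[j]) \in R_j$, so the two tuples differ and hence $\bfx_i^{(1)}[j] \neq \bfc_i^*[j]$ for some $i$. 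In either case $j$ witnesses a mismatch between some $\bfx_i^{(t)}$ and $\bfc_i^*$.

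Finally I would sum over these witnesses. Assigning to each $j \in \overline{Q}$ one witnessing pair $(i,t)$ gives an injection of $\overline{Q}$ into the set of mismatch triples $(i,t,j)$, whose total count is $\sum_{i\in[k]}\sum_{t\in[r]}\hdist(\bfx_i^{(t)},\bfc_i^*)$. Each summand is at most $\opt(J)$ since $\bfx_i^{(t)} \in X_i$, so $|\overline{Q}| \le rk\cdot\opt(J)$, which is the second bullet. I expect the only delicate point to be the case analysis for $\overline{Q}$, specifically handling the relation-violation case by appealing to the fact that $\bfc^*$ satisfies $\RR$; the rest is a direct application of Lemma~\ref{lemma:ragree} together with the mismatch count.
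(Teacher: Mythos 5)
Your proposal is correct and follows essentially the same route as the paper: applying Lemma~\ref{lemma:ragree} cluster by cluster with $Y=X_i$, $\bfc^*=\bfc_i^*$ (noting $d^*\le\opt(J)$ and $Q\subseteq Q_i$) for the first bullet, and then the identical two-case analysis for positions of $\overline{Q}$ (disagreement within some $\bfx_i^{(1)},\ldots,\bfx_i^{(r)}$, or a violated relation $R_j$ contrasted with $(\bfc_1^*[j],\ldots,\bfc_k^*[j])\in R_j$), charging each such position to a mismatch with some $\bfc_i^*$ and bounding the total by $rk\cdot\opt(J)$. No gaps; your explicit injection into mismatch triples is just a slightly more formal phrasing of the paper's counting step.
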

\begin{proof}
Fix an integer $i\in [k]$. By substituting $Y=X_i$ and $\bfc^*=\bfc_i^*$ in Lemma~\ref{lemma:ragree}, we get 
$\{\bfx^{(1)}_i$, \ldots, $\bfx^{(r)}_i\} \subset X_i$ such that for any $\bfx\in X_i$, $\hdist^Q(\bfx, \bfx^{(1)}_{i}) - \hdist^Q(\bfx, \bfc_i^*) \le \frac{1}{r - 1}\opt(J)$. That is, we have proved 
the first condition in the lemma.  The second condition is proved in the following claim. 

\begin{claim}
$|\overline{Q}| \le rk \cdot \opt(J)$. 
\end{claim}
\begin{proof}
  We claim that for each position $j \in \overline{Q}$ there exist $i \in [k]$ and $s \in [r]$ such that $\bfx^{(s)}_i[j] \ne {\bfc}^*_i[j]$. There are two kinds of positions in $\overline{Q}$. First, positions, where for some $i \in [k]$ vectors $\bfx^{(1)}_i$, \ldots, $\bfx^{(r)}_i$ do not agree, in this case certainly one of them does not agree with the corresponding position in $\bfc^*_i$. Second, positions $j$ where for any $i \in [k]$, $\bfx^{(1)}_i[j]=\bfx^{(2)}_i[j]  = \cdots = \bfx^{(r)}_i[j]$, but $(\bfx^{(1)}_1[j],\ldots,\bfx^{(1)}_k[j]) \notin R_j$. Then, there exists $i \in [k]$ such that $\bfx^{(1)}_i[j] \ne {\bfc}^*_i[j]$ because $({\bfc}^*_i[j],\ldots,{\bfc}^*_k[j]) \in R_j$.
    
 Now, for any $i \in [k]$ and $s \in [r]$, $\bfx^{(s)}_i$ contributes at most $\opt(J)$ positions to $\overline{Q}$, since $\hdist(\bfx^{(s)}_i, {\bfc}^*_i) \le \opt(J)$. Thus,  in total there are at most $rk \cdot \opt(J)$ positions in $\overline{Q}$.
\end{proof}
This completes the proof of the lemma. 
\end{proof}


As mentioned earlier, we fix the entries of our solution in positions $j$ of $Q$ with values in $\bfx^{(1)}_1[j],\ldots,\bfx^{(1)}_k[j]$.  
Towards finding the entries of our solution in positions  of $\overline{Q}$, we define the following problem and solve it.

\defproblem{\rpartitioncenterstar}{A positive integer $k$, a set $X\subseteq \{0,1\}^m$ of $n$ vectors partitioned into $X_1\uplus\ldots\uplus X_k$,  a tuple of $k$-ary relations
$\cR=(R_1, \dots, R_m)$, and for all $\bfx\in X$, $d_{\bfx}\in {\mathbb N}$ }{Among all tuples $C=(\bfc_1,\ldots,\bfc_k)$ of vectors from $\{0,1\}^m$ satisfying $\cR$, find a tuple $C$ that minimizes the integer $d$ such that  
for all $i\in [k]$ and $\bfx\in X_i$,  $\hdist(\bfx,\bfc_i)\leq d-d_{\bfx}$.}


\begin{restatable}{lemma}{lemlargeopt}
\label{lemma:largeopt}
Let $J'=(k,X=X_1\uplus\ldots X_k,\RR, (d_\bfx)_{\bfx \in X})$ be an instance of \rpartitioncenterstar, $\opt(J')\geq \frac{m}{c}$ for some integer $c$, and $0<\delta<1/c$. Then, there is an algorithm which runs in time $m^{\OO(1)}n^{\OO(c^2k/\delta^2)}$, and outputs a solution $C$ of $J'$, of cost at most $(1 + \delta) \opt(J')$ with probability at least $1-n^{-2}$.
\end{restatable}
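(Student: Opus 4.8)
The plan is to solve the integer program underlying \rpartitioncenterstar by linear-programming relaxation followed by independent randomized rounding, in the spirit of Li, Ma, and Wang~\cite{LiMW02}, and to use the hypothesis $\opt(J')\ge m/c$ to convert the additive rounding error into a $\delta$-multiplicative one. First I would write the IP with one indicator variable $x_{j,t}$ for every position $j\in[m]$ and every admissible tuple $t=(t_1,\ldots,t_k)\in R_j$, meaning that the output sets $(\bfc_1[j],\ldots,\bfc_k[j])=t$; the constraints $\sum_{t\in R_j}x_{j,t}=1$ for each $j$ force exactly one admissible tuple per coordinate, which makes any integral solution automatically satisfy $\RR$. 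Writing $\hdist(\bfx,\bfc_i)=\sum_{j\in[m]}\sum_{t\in R_j}x_{j,t}\,|\bfx[j]-t_i|$, the objective is to minimize $d$ subject to $\hdist(\bfx,\bfc_i)+d_{\bfx}\le d$ for all $i\in[k]$ and $\bfx\in X_i$. Relaxing $x_{j,t}\in\{0,1\}$ to $x_{j,t}\ge0$ yields an LP whose optimum $\bar d$ satisfies $\bar d\le\opt(J')$.

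Next I would branch on $m$ against a threshold $\tau:=2c^2\ln n/\delta^2$. If $m\le\tau$, the problem is solved exactly by brute force: there are at most $\prod_{j\in[m]}|R_j|\le(2^k)^m=2^{km}\le n^{\OO(c^2k/\delta^2)}$ ways to pick an admissible tuple at every coordinate, so I evaluate the cost of each and return the best, obtaining the exact optimum (hence a $(1+\delta)$-approximation) deterministically within the allotted time. If $m>\tau$, I solve the LP --- it has $\OO(m2^k)$ variables, so this costs $(m2^k)^{\OO(1)}\le m^{\OO(1)}n^{\OO(k)}$ --- to get a fractional optimum $(\bar x,\bar d)$, and then round independently over coordinates, selecting tuple $t$ at coordinate $j$ with probability $\bar x_{j,t}$. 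The resulting $C$ satisfies $\RR$ by construction.

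For the analysis, fix a pair $(i,\bfx)$ with $\bfx\in X_i$ and write $\hdist(\bfx,\bfc_i)=\sum_{j\in[m]}Z_j$, where the $Z_j=|\bfx[j]-\bfc_i[j]|\in\{0,1\}$ are independent across coordinates and $\expect[\hdist(\bfx,\bfc_i)]=\sum_{j}\sum_{t}\bar x_{j,t}|\bfx[j]-t_i|\le\bar d-d_{\bfx}\le\opt(J')-d_{\bfx}$. Hoeffding's inequality for a sum of $m$ independent $[0,1]$-variables gives $\Pr[\hdist(\bfx,\bfc_i)\ge\expect[\hdist(\bfx,\bfc_i)]+\delta\opt(J')]\le\exp(-2\delta^2\opt(J')^2/m)$. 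This is the one place the hypothesis enters: $\opt(J')\ge m/c$ yields $\opt(J')^2/m\ge m/c^2>\tau/c^2=2\ln n/\delta^2$, so the per-pair failure probability is below $\exp(-4\ln n)=n^{-4}$. A union bound over the $n$ pairs $(i,\bfx)$ (one per vector, since $X=\biguplus_iX_i$) bounds the total failure by $n^{-3}\le n^{-2}$. On the complementary event, every pair satisfies $\hdist(\bfx,\bfc_i)+d_{\bfx}\le(\opt(J')-d_{\bfx})+\delta\opt(J')+d_{\bfx}=(1+\delta)\opt(J')$, so $C$ has cost at most $(1+\delta)\opt(J')$, as required; and the factor $2^k$ together with the LP-solving time is absorbed into $m^{\OO(1)}n^{\OO(c^2k/\delta^2)}$ since $2^{\OO(k)}\le n^{\OO(k)}$.

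The crux --- and the only genuinely delicate point --- is the calibration that makes the two branches meet at $\tau$: the Hoeffding bound only dominates the union bound once $\opt(J')^2/m=\Omega(\log n/\delta^2)$, whereas brute force is affordable only while $m=\OO(c^2\log n/\delta^2)$, and the hypothesis $\opt(J')\ge m/c$ is exactly what reconciles these two thresholds through the common factor $c^2$. I expect the remaining work --- pinning down the constant in $\tau$, handling the boundary $m\approx\tau$, and the bookkeeping that folds the LP size and the $2^k$ tuple count into the stated running time --- to be routine.
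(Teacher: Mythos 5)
Your proposal is correct and follows essentially the same route as the paper's proof: brute-force enumeration when $m=\OO(c^2\log n/\delta^2)$, and otherwise the LP relaxation of the natural IP with one indicator per admissible tuple per coordinate, independent per-coordinate randomized rounding, a per-constraint concentration bound, and a union bound over the $n$ vectors, with the hypothesis $\opt(J')\geq m/c$ turning the additive deviation into a $\delta$-multiplicative error. The only cosmetic differences are that you invoke Hoeffding with deviation $\delta\opt(J')$ while the paper uses the Chernoff-type bound of Li, Ma, and Wang with deviation $\epsilon m$ for $\epsilon=\delta/c$, and your threshold constant differs from the paper's $9c^2\log n/\delta^2$; both calibrations are sound.
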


Before proving Lemma~\ref{lemma:largeopt}, we explain how all these results puts together to form a proof of Lemma~\ref{lem:partkclustering}.  We restate Lemma~\ref{lem:partkclustering} for the convenience of the reader. 


\lempartitionprob*

\begin{proof}

Let $J=(k,X=X^i_1\uplus\ldots X^i_k,\RR)$ be the input instance of \rpartitioncenter, and $0<\epsilon<\frac{1}{2}$ be the error parameter. 
Let $({\bfc}^*_1,\ldots,{\bfc}^*_k)$ be an optimal solution for $J$. Let $r\geq 2$ be an integer which we fix later.  
First, for each $i \in [k]$ we obtain $r$ vectors $\bfx^{(1)}_i$, \ldots, $\bfx^{(r)}_i \in X_i$ which satisfy the conditions of Lemma~\ref{lem:firstphase}. Their existence is guaranteed by Lemma~\ref{lem:firstphase}, and we guess them in time $n^{\OO(rk)}$ over all $i \in [k]$.
For each $i\in [k]$, let $Q_i$ be the set of coordinates on which $\bfx^{(1)}_i$, \ldots, $\bfx^{(r)}_i$ agree, \(Q = \left\{j \in \bigcap_{i \in [k]} Q_i  \colon (\bfx^{(1)}_1[j],\bfx^{(1)}_2[j],\ldots,\bfx^{(1)}_k[j]) \in R_j\right\}\), and $\overline{Q}=[m] \setminus Q$. Next, we construct a solution $C=(\bfc_1,\ldots,\bfc_k)$ as follows. 
For each $i\in [k]$ and $j\in Q$, we set $\bfc_i[j]=\bfx_i^{(1)}[j]$. 

Towards finding the entries of vectors $\bfc_1,\ldots,\bfc_k$ at the coordinates in $\overline{Q}$, we use  Lemma~\ref{lemma:largeopt}. Let $J'$ be the instance of \rpartitioncenterstar, which is a natural restriction of $J$ to $\overline{Q}$. 
That is,  $J'=(k,X'=X'_1\uplus\ldots X'_k,\RR|_{\overline{Q}}, (d_{\bfx|_{\overline{Q}}})_{\bfx \in X'})$, where 
for each $i\in [k]$, $X_i'=\{\bfx|_{\overline{Q}} \colon \bfx\in X_i\}$ and for each $\bfx \in X_i$, $d_{\bfx|_{\overline{Q}}} = \hdist^Q(\bfx, \bfx^{(1)}_i)$. By the second condition in Lemma~\ref{lem:firstphase}, 
we have that $|\overline{Q}| \le rk \cdot \opt(J)$. 

\begin{claim}
\label{cliamoptcom}
$\opt(J)\leq \opt(J')\leq \left(1+\frac{1}{r-1}\right) \opt(J).$
\end{claim}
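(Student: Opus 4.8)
The plan is to prove the two inequalities separately, each by transferring an explicit feasible solution between the two instances. The key preliminary observation is that, since the penalty is defined by $d_{\bfx|_{\overline{Q}}}=\hdist^Q(\bfx,\bfx^{(1)}_i)$ and the distance over $\overline{Q}$ is $\hdist^{\overline{Q}}(\bfx,\bfc_i)$, the cost of a tuple $C=(\bfc_1,\ldots,\bfc_k)$ for $J'$ is exactly
\[
\max_{i\in[k],\,\bfx\in X_i}\left(\hdist^{\overline{Q}}(\bfx,\bfc_i)+\hdist^Q(\bfx,\bfx^{(1)}_i)\right).
\]
This decomposes both problems along $[m]=Q\uplus\overline{Q}$: on $\overline{Q}$ the two instances pay the same quantity $\hdist^{\overline{Q}}(\bfx,\bfc_i)$, while on $Q$ the instance $J$ pays $\hdist^Q(\bfx,\bfc_i)$ and $J'$ pays the fixed penalty $\hdist^Q(\bfx,\bfx^{(1)}_i)$ instead.

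For the lower bound $\opt(J)\le\opt(J')$, I would take an optimal solution $C=(\bfc_1,\ldots,\bfc_k)$ of $J'$ (a tuple over the coordinates $\overline{Q}$) and extend it to a full tuple $\widehat{C}=(\widehat{\bfc}_1,\ldots,\widehat{\bfc}_k)$ on $[m]$ by $\widehat{\bfc}_i[j]=\bfc_i[j]$ for $j\in\overline{Q}$ and $\widehat{\bfc}_i[j]=\bfx^{(1)}_i[j]$ for $j\in Q$. The tuple $\widehat{C}$ satisfies $\RR$: feasibility on $\overline{Q}$ is inherited from $C$ (which satisfies $\RR|_{\overline{Q}}$), and on $Q$ we have $(\bfx^{(1)}_1[j],\ldots,\bfx^{(1)}_k[j])\in R_j$ by the very definition of $Q$. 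Splitting the Hamming distance gives $\hdist(\bfx,\widehat{\bfc}_i)=\hdist^{\overline{Q}}(\bfx,\bfc_i)+\hdist^Q(\bfx,\bfx^{(1)}_i)$, so the cost of $\widehat{C}$ in $J$ equals the cost of $C$ in $J'$, namely $\opt(J')$, and hence $\opt(J)\le\opt(J')$.

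For the upper bound $\opt(J')\le\left(1+\tfrac{1}{r-1}\right)\opt(J)$, I would restrict the optimal solution $C^*=(\bfc^*_1,\ldots,\bfc^*_k)$ of $J$ to $\overline{Q}$; this remains feasible for $\RR|_{\overline{Q}}$. Its cost in $J'$ is $\max_{i,\bfx}\bigl(\hdist^{\overline{Q}}(\bfx,\bfc^*_i)+\hdist^Q(\bfx,\bfx^{(1)}_i)\bigr)$. The only substantive step is to trade $\hdist^Q(\bfx,\bfx^{(1)}_i)$ for $\hdist^Q(\bfx,\bfc^*_i)$ up to additive error, which is precisely the first conclusion of Lemma~\ref{lem:firstphase}: $\hdist^Q(\bfx,\bfx^{(1)}_i)\le\hdist^Q(\bfx,\bfc^*_i)+\tfrac{1}{r-1}\opt(J)$. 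Adding $\hdist^{\overline{Q}}(\bfx,\bfc^*_i)$ reassembles the full distance $\hdist(\bfx,\bfc^*_i)\le\opt(J)$, so each term is bounded by $\left(1+\tfrac{1}{r-1}\right)\opt(J)$, and taking the maximum yields the claim. Both directions are short given the earlier results; there is no genuine obstacle, and the only point requiring care is the bookkeeping that the penalty $d_{\bfx|_{\overline{Q}}}$ in \rpartitioncenterstar matches $\hdist^Q(\bfx,\bfx^{(1)}_i)$ exactly, so that the decomposition over $Q\uplus\overline{Q}$ is lossless and all the approximation slack comes solely from Lemma~\ref{lem:firstphase}.
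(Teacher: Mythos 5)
Your proof is correct and follows essentially the same route as the paper: for $\opt(J)\le\opt(J')$ you glue the $J'$-solution on $\overline{Q}$ with $\bfx^{(1)}_i$ on $Q$ (feasibility on $Q$ coming from the definition of $Q$) and observe the costs coincide, and for the other direction you restrict the optimal $J$-solution to $\overline{Q}$ and invoke the first conclusion of Lemma~\ref{lem:firstphase} to replace $\hdist^Q(\bfx,\bfx^{(1)}_i)$ by $\hdist^Q(\bfx,\bfc^*_i)+\tfrac{1}{r-1}\opt(J)$. This matches the paper's argument, including the key bookkeeping identity $d_{\bfx|_{\overline{Q}}}=\hdist^Q(\bfx,\bfx^{(1)}_i)$.
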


\begin{proof}
First,  we prove that $\opt(J)\leq \opt(J')$.  
Towards that we show that we can transform a solution $C' = (\bfc'_1, \cdots, \bfc'_k)$ of $J'$ with the objective value $d$ to a solution $C$ of $J$ with the same objective value. For each $i \in [k]$, consider $\widehat{\bfc}_i$ which is equal to $\bfx^{(1)}_i$ restricted to $Q$, and to $\bfc'_i$ restricted to $\overline{Q}$, and the solution $\widehat{C} = (\widehat{\bfc}_1, \cdots, \widehat{\bfc}_k)$. Clearly, $\widehat{C}$ satisfies $\RR$ since on $\overline{Q}$ it is guaranteed by $C'$ being a solution to $J'$, and on $Q$ by construction of $Q$. 
The objective value of $C$ is 
\begin{eqnarray*}
 \max_{i \in [k],\bfx \in X_i} \hdist(\bfx, \bfc_i) &=& \max_{i \in [k],\bfx \in X_i} \left(\hdist^{\overline{Q}}(\bfx, \bfc_i) + \hdist^Q (\bfx, \bfc_i)\right) \\ 
 &=& \max_{i \in [k],\bfx \in X_i} \left(\hdist(\bfx|_{\overline{Q}}, \bfc'_i) + \hdist^Q (\bfx, \bfx^{(1)}_i)\right)\\
 &=& \max_{i \in [k],\bfx \in X_i} \left(\hdist(\bfx|_{\overline{Q}}, \bfc'_i) + d_{\bfx|_{\overline{Q}}}\right) = d.
\end{eqnarray*}
Thus, $\opt(J)\leq \opt(J')$. 

Next, we prove that $\opt(J')\leq\left(1+\frac{1}{r-1}\right) \opt(J)$. Recall that $({\bfc}^*_1,\ldots,{\bfc}^*_k)$ is an optimal solution for $J$. Then, $({\bfe}^*_1,\ldots,{\bfe}^*_k)$, where each $\bfe^*_i$ is the restriction of $\bfc^*_i$ on $\overline{Q}$, is a solution for $J'$.  For each $i\in [k]$ and $\bfx\in X_i$, 
\begin{eqnarray*}
    \hdist(\bfx|_{\overline{Q}}, \bfe^*_i) + d_{\bfx|_{\overline{Q}}}&=&\hdist^{\overline{Q}}(\bfx, \bfc^*_i) + \hdist^Q (\bfx, \bfx^{(1)}_i)\\
&\leq& \hdist^{\overline{Q}}(\bfx, \bfc^*_i)+\hdist^Q(\bfx, \bfc^*_i) + \frac{1}{r - 1} \opt(J) \qquad\qquad(\mbox{By Lemma~\ref{lem:firstphase}})\\
&\leq& \hdist(\bfx, \bfc^*_i) + \frac{1}{r - 1} \opt(J) \\
&\leq& \left(1+ \frac{1}{r - 1}\right) \opt(J)
\end{eqnarray*}
This completes the proof of the claim. 
\end{proof}

Since $|\overline{Q}| \le rk \cdot \opt(J)$ and by Claim~\ref{cliamoptcom}, we have that $\opt(J')\geq \frac{|\overline{Q}|}{rk}=\frac{|\overline{Q}|}{c}$, where $c=rk$. Let $0<\delta<\frac{1}{c}$ be a number which we fix later.

Now we apply Lemma~\ref{lemma:largeopt} on the input $J'$ and $\delta$, and let $C'=(\bfc'_1$, \ldots, $\bfc'_k)$ be the solution for $J'$ obtained. We know that the cost $d'$ of $\bfc'$ is at most $(1+\delta)\opt(J')$ with probability at least $1-n^{-2}$. 
For the rest of the proof we assume that the cost $d'\leq (1+\delta)\opt(J')$. 
Recall that we have partially computed the entries of the solution $\bfc=(\bfc_1,\ldots,\bfc_k)$ for the instance $J$. That is, for each $j\in Q$ and $i\in [k]$, we have already set the value of $\bfc_i[j]$.   Notice that $C'\subseteq \{0,1\}^{\vert \overline{Q}\vert}$. Since $J'$ is obtained from $J$ by restricting to $\overline{Q}$, there is a natural bijection $f$ from $\overline{Q}$ to $[\vert \overline{Q}\vert]$ such that for 
each $\bfx\in X$ and $j\in {\overline{Q}}$, $\bfx[j]=\bfy[f(j)]$, where $\bfy=\bfx|_{\overline{Q}}$.  Now for each $i\in [k]$ and $j\in \overline{Q}$, we set $\bfc_i[j]=\bfc'_i[f(j)]$.



In Claim~\ref{cliamoptcom}, we have proven that the solution $C$ of $J$ obtained in this way has cost at most $d'$.
By Lemma~\ref{lemma:largeopt}, we know that $d'\leq (1+\delta)\opt(J')$. By Claim~\ref{cliamoptcom}, $\opt(J')\leq (1+\frac{1}{r-1})\opt(J)$. Thus, we have that the cost of the solution $C$ of $J$ is at most $(1+\delta)(1+\frac{1}{r-1})\opt(J)$. Now we fix $r=(1+\frac{4}{\epsilon})$ and $\delta=\frac{\epsilon}{(2\epsilon+8)k}$. Then the cost of $C$ is at most $(1+\epsilon)\opt(J)$. 

\paragraph*{Running time analysis.}
The number of choices for  $\{\bfx^{(1)}_1$, \ldots, $\bfx^{(r)}_1\} \subset X_1, \ldots, \{\bfx^{(1)}_k$, \ldots, $\bfx^{(r)}_k\} \subset X_k$ is at most $n^{\OO(rk)}=n^{\OO(k/\epsilon)}$.  For each such choice, we run the algorithm of Lemma~\ref{lemma:largeopt} which takes time at most $m^{\OO(1)}n^{\OO(c^2k/\delta^2)}=m^{\OO(1)}n^{\OO((k/\epsilon)^4)}$. Thus, the total running time is $m^{\OO(1)}n^{\OO((k/\epsilon)^4)}$. 
\end{proof}

Now the only piece left is the proof of Lemma~\ref{lemma:largeopt}. We use the following tail inequality (a variation of Chernoff bound) in the proof of Lemma~\ref{lemma:largeopt}. 

\begin{proposition}[Lemma 1.2~\cite{LiMW02}]
\label{prop:chernoff}
Let $X_1,\ldots,X_n$ be $n$ independent $0$-$1$ random variables, $X=\sum_{i=1}^n X_i$, and $0<\epsilon\leq 1$.  
Then, \(\Pr[X>E[X]+\epsilon n]\leq e^{-\frac{1}{3}n\epsilon^2}.\)
\end{proposition}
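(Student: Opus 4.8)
The plan is to prove this additive Chernoff bound by the standard exponential moment (Bernstein) method, then optimize the free parameter and finally reduce to a one-variable analytic inequality. Write $p_i = \Pr[X_i = 1] = E[X_i]$ and $\mu = E[X] = \sum_{i=1}^n p_i$, and note that $\mu \le n$. For any $t > 0$, Markov's inequality applied to the nonnegative random variable $e^{tX}$ gives
\[
\Pr[X > \mu + \epsilon n] = \Pr[e^{tX} > e^{t(\mu + \epsilon n)}] \le e^{-t(\mu + \epsilon n)}\, E[e^{tX}].
\]

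First I would bound the moment generating function. By independence, $E[e^{tX}] = \prod_{i=1}^n E[e^{tX_i}] = \prod_{i=1}^n \bigl(1 + p_i(e^t - 1)\bigr)$, and applying $1 + x \le e^x$ termwise bounds this by $\exp\bigl((e^t-1)\mu\bigr)$. Substituting into the previous display, the exponent becomes $\mu(e^t - 1 - t) - t\epsilon n$; since $e^t - 1 - t \ge 0$ and $\mu \le n$, I may replace $\mu$ by $n$ to obtain the clean estimate
\[
\Pr[X > \mu + \epsilon n] \le \exp\bigl(n(e^t - 1 - t(1 + \epsilon))\bigr).
\]

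Next I would optimize the exponent over $t > 0$. Setting the derivative $e^t - (1+\epsilon)$ equal to zero yields the minimizer $t = \ln(1+\epsilon)$, at which the bracketed expression equals $\epsilon - (1+\epsilon)\ln(1+\epsilon)$. Thus the whole argument reduces to the analytic claim that
\[
g(\epsilon) := (1+\epsilon)\ln(1+\epsilon) - \epsilon - \tfrac{1}{3}\epsilon^2 \ge 0 \quad \text{for } 0 < \epsilon \le 1,
\]
which is exactly the inequality $\epsilon - (1+\epsilon)\ln(1+\epsilon) \le -\epsilon^2/3$ needed to conclude the stated bound $e^{-n\epsilon^2/3}$.

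I expect this last inequality to be the only nonroutine point, precisely because $g$ is \emph{not} convex throughout the range, so a naive ``check endpoints'' argument does not suffice. One verifies $g(0) = 0$ and $g'(0) = 0$, where $g'(\epsilon) = \ln(1+\epsilon) - \tfrac{2}{3}\epsilon$. The second derivative $g''(\epsilon) = \tfrac{1}{1+\epsilon} - \tfrac{2}{3}$ is positive for $\epsilon < 1/2$ and negative for $\epsilon > 1/2$, so $g'$ rises then falls on $(0,1]$; since $g'(0) = 0$ and $g'(1) = \ln 2 - \tfrac{2}{3} > 0$, monotonicity forces $g'(\epsilon) > 0$ on all of $(0,1]$. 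Hence $g$ is strictly increasing with $g(0)=0$, giving $g(\epsilon) \ge 0$ and completing the proof.
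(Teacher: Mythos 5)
The paper offers no proof of this proposition at all: it is imported verbatim as Lemma~1.2 of Li, Ma, and Wang \cite{LiMW02}, so there is no internal argument to compare yours against. Your proof is correct and self-contained, and it is the standard exponential-moment route one would expect. All the steps check out: Markov's inequality applied to $e^{tX}$ with $t>0$, the bound $E[e^{tX}]\le \exp\bigl((e^t-1)\mu\bigr)$ via $1+x\le e^x$, the replacement of $\mu$ by $n$ (valid precisely because $e^t-1-t\ge 0$), and the optimal choice $t=\ln(1+\epsilon)$, which reduces everything to the analytic claim $g(\epsilon):=(1+\epsilon)\ln(1+\epsilon)-\epsilon-\tfrac{1}{3}\epsilon^2\ge 0$ on $(0,1]$. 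That last inequality is indeed the only delicate point, and you handle it correctly: since $g''(\epsilon)=\tfrac{1}{1+\epsilon}-\tfrac{2}{3}$ changes sign at $\epsilon=\tfrac12$, the derivative $g'(\epsilon)=\ln(1+\epsilon)-\tfrac{2}{3}\epsilon$ rises on $(0,\tfrac12)$ and falls on $(\tfrac12,1]$, so $g'(0)=0$ together with $g'(1)=\ln 2-\tfrac{2}{3}>0$ forces $g'>0$ throughout $(0,1]$, hence $g\ge g(0)=0$. Two minor remarks: the unimodality argument for $g'$ is exactly the right fix for the non-convexity you point out, and, as an aside, the stated bound is weaker than Hoeffding's inequality $\Pr[X>E[X]+\epsilon n]\le e^{-2n\epsilon^2}$, so it could also be cited as an immediate consequence of that classical result; your direct derivation, however, is complete and needs no external input.
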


Finally,  we prove Lemma~\ref{lemma:largeopt}. 

\begin{proof}[Proof of Lemma~\ref{lemma:largeopt}]
First, assume that $m < 9 c^2 \log n/\delta^2$. 
If this is the case, we enumerate all possible solutions for $J'$ and output the best solution. The number of solutions is 
at most $2^{k\cdot m}=n^{\OO(c^2 k/\delta^2)}$. Thus, in this case the algorithm is exact and deterministic, and the running time bound holds.
For the rest of the proof we assume that $m \ge 9 c^2 \log n/\delta^2$. 

\rpartitioncenterstar\ can be formulated as a $0$-$1$ optimization problem as explained below. For each $j\in [m]$ and tuple $t\in R_j$, we use a $0$-$1$ variable $y_{j,t}$ to indicate whether the $j^{th}$ entries of a solution form a tuple $t\in R_j$ or not. For any $i\in [k]$, $\bfx\in X_i$,  $j\in [m]$ and $t\in R_j$, denote $\chi_i(\bfx[j],t)=0$ if $\bfx[j]=t[i]$ and 
 $\chi_i(\bfx[j],t)=1$ if $\bfx[j]\neq t[i]$. Now \rpartitioncenterstar\ can be defined as the following $0$-$1$ optimization problem. 
 
 \begin{eqnarray}
 &&\min d \nonumber\\
 &&\mbox{subject to} \nonumber \\
 &&\sum_{t\in R_j} y_{j,t}=1, \qquad\qquad\qquad\qquad\qquad\quad\; \mbox{for all } j\in [m] ; \label{eqn:obtprob}\\
 &&\sum_{j\in [m]} \sum_{t\in R_j} \chi_i(\bfx[j],t)\cdot y_{j,t}\leq d-d_{\bfx}, \qquad \mbox{for all } i\in [k] \mbox{ and } \bfx \in X_i \nonumber\\
&&y_{j,t}\in \{0,1\}, \qquad\qquad\qquad\qquad\qquad\quad\; \mbox{for all } j\in [m] \mbox{ and } t\in R_j \nonumber.
  \end{eqnarray}
Any solution  $y_{j,t}$ ($j\in [m]$ and $t\in R_j$) to \eqref{eqn:obtprob} corresponds to the solution $C=(\bfc_1,\ldots,\bfc_k)$ where for all $j\in [m]$ and $t\in R_j$ such that $y_{j,t}=1$, we have  $(\bfc_1[j],\ldots,\bfc_k[j])=t$. 

Now, we solve the above optimization problem using linear programming relaxation and   obtain a fractional solution 
$y^{\star}_{j,t}$ ($j\in [m]$ and $t\in R_j$) with cost $d'$.  Clearly,  $d'\leq d_{opt}=\opt(J')$. 
Now, for each $j\in [m]$, independently with probability $y^{\star}_{j,t}$, we set $y'_{j,t}=1$ and $y'_{j,t'}=0$, for any $t'\in R_j\setminus \{t\}$. Then  $y'_{j,t}$ ($j\in [m]$ and $t\in R_j$) form a solution to \eqref{eqn:obtprob}. Next we construct the solution $C=(\bfc_1,\ldots,\bfc_k)$ to \rpartitioncenterstar, corresponding to $y'_{j,t}$ ($j\in [m]$ and $t\in R_j$). That is, for all $j\in [m]$ and $t\in R_j$ such that $y_{j,t}=1$, we have  $(\bfc_1[j],\ldots,\bfc_k[j])=t$.

For the running time analysis, notice that solving the linear program and performing the random rounding takes polynomial time in the size of the problem \eqref{eqn:obtprob}. And the size of \eqref{eqn:obtprob} is polynomial in the size of $J'$, so the running time bound is satisfied. It remains to show that the constructed solution has cost at most $(1 + \delta) \opt(J')$ with probability at least $1 - n^{-2}$.

For any $j\in [m]$, the above random rounding procedure ensures that there is exactly one tuple $t\in R_j$ 
such that $y'_{j,t}=1$. This implies that for any $j\in [m]$, $i\in [k]$ and $\bfx\in X_i$, $\sum_{t\in R_j} \chi_i(\bfx[j],t)\cdot y'_{j,t}$ is a $0$-$1$ random variable. Since for each $j\in [m]$ the rounding procedure is independent, we have that for any $i\in [k]$ and $\bfx\in X_i$ the random variables $(\sum_{t\in R_1} \chi_i(\bfx[1],t)\cdot y'_{1,t}), \ldots, (\sum_{t\in R_m} \chi_i(\bfx[m],t)\cdot y'_{j,t})$ are independent. Hence, for any $i\in [k]$ and $\bfx\in X_i$, the Hamming distance between $\bfx$ and $\bfc_i$, $\hdist(\bfx,\bfc_i)=\sum_{j\in [m]}\sum_{t\in R_j} \chi_i(\bfx[j],t)\cdot y'_{j,t}$,  is the  sum of $m$ independent $0$-$1$ random variables.  For each $i\in [k]$ and $\bfx\in X_i$, we upper bound the expected value of   $\hdist(\bfx,\bfc_i)$ as follows. 
\begin{eqnarray*}
E[\hdist(\bfx,\bfc_i)]&=&E\left[\sum_{j\in [m]}\sum_{t\in R_j} \chi_i(\bfx[j],t)\cdot y'_{j,t}\right]\\
&=&\sum_{j\in [m]}\sum_{t\in R_j} \chi_i(\bfx[j],t)\cdot E[y'_{j,t}]\\
&=&\sum_{j\in [m]}\sum_{t\in R_j} \chi_i(\bfx[j],t)\cdot y^{\star}_{j, t}\\
&\leq& d'-d_{\bfx} \qquad\qquad\qquad (\mbox{By the constraints of \eqref{eqn:obtprob}} )
\end{eqnarray*}
Fix $\epsilon=\frac{\delta}{c}$. 
Then, by Proposition~\ref{prop:chernoff}, for all $i\in [k]$, and $\bfx\in X_i$, 
\[
\Pr[\hdist(\bfx,\bfc_i)>d'-d_{\bfx}+\epsilon m] \leq e^{-\frac{1}{3}m\epsilon^2}.
\]
Therefore, by the union bound, 
\begin{equation}
\label{eqn:unionboundstar}
\Pr[\mbox{There exist }i\in [k] \mbox{ and } \bfx\in X_i \mbox{ such that }\hdist(\bfx,\bfc_i)>d'-d_{\bfx}+\epsilon m] \leq n\cdot e^{-\frac{1}{3}m\epsilon^2}
\end{equation}
We remind that $m\geq 9c^2\log n/\delta^2 = 9\log n/\epsilon^2$ and so $n\cdot e^{-\frac{1}{3}m\epsilon^2} \le n^{-2}$. Thus, by \eqref{eqn:unionboundstar}, 
\begin{equation}
\label{eqn:case1}
\Pr[\mbox{There exist }i\in [k] \mbox{ and } \bfx\in X_i \mbox{ such that }\hdist(\bfx,\bfc_i)>d'-d_{\bfx}+\epsilon m] \leq n^{-2}.
\end{equation}
Since $d'\leq \opt(J')$ and $\opt(J')\geq m/c$, $d' + \epsilon m \le (1 + c \epsilon) \opt(J')$. Then, the probability that 
there exist $i\in [k]$ and  $\bfx\in X_i$ such that $\hdist(\bfx,\bfc_i)>(1+c\epsilon)\opt(J')-d_{\bfx}$ is at most $n^{-2}$ by \eqref{eqn:case1}.
Since $c\epsilon=\delta$, the proof is complete. 
\end{proof}


\section{Applications}
\label{sec:applications}

%
%

In this section we explain the impact of  Theorem~\ref{thm:kclustering} about  \rcenter to other  problems around low-rank matrix approximation.
We would like to mention 
that \rcenter is very similar to the \rclustering problem from \cite{DBLP:journals/corr/abs-1807-07156}. In \rcenter we want to minimize the maximum distance of a vector from the input set of vectors to the closest center, whereas in \rclustering the sum of distances is minimized.  While these problems are different, the reduction we explain here,   except a few details, are identical  to the ones described in \cite{DBLP:journals/corr/abs-1807-07156}. 
For reader's convenience,  we   give one reduction (Lemma~\ref{lem:matrixFas}) in   full details and skip all other reductions, which are similar. 


%

In the following lemma we show that 
\LRAGF\ is a special case of \rcenter.

\begin{lemma}\label{lem:matrixFas} 
There is an algorithm that given an  instance $(\bfA,r)$ of  \LRAGF, 
where $\bfA$ is an $m\times n$-matrix  and $r$ is an integer,
runs in time $\OO(m+n+2^{2r})$, and outputs an instance $J=(X,k=2^r,\cR)$ of \rcenter\ 
with the following property. 
Given any $\alpha$-approximate solution $C$ to $J$,  an $\alpha$-approximate solution $\bfB$ to $(\bfA,r)$ can be constructed in time $\OO(rmn)$ and vice versa.
%
%
\end{lemma}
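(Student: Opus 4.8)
The plan is to exploit the elementary fact that a binary matrix $\bfB$ has $\GFrank$ at most $r$ precisely when it factors as $\bfB=\bfU\bfV$ with $\bfU\in\{0,1\}^{m\times r}$ and $\bfV\in\{0,1\}^{r\times n}$ (all arithmetic over $\GF$), so that every column of $\bfB$ lies in the image $\{\bfU\bfv\colon \bfv\in\{0,1\}^r\}$, a $\GF$-subspace consisting of at most $2^r$ points. Thus a rank-$\leq r$ approximation of $\bfA$ is nothing but a choice of (at most) $2^r$ ``centers'' forming such a subspace together with an assignment of each column of $\bfA$ to its nearest center, which is exactly a \rcenter instance with $k=2^r$ once the constraint ``the centers form a $\GF$-subspace'' is encoded by the relations. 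Concretely, I would take $X$ to be the set of columns of $\bfA$, index the $k=2^r$ centers of a candidate tuple $C=(\bfc_{\bfv})_{\bfv\in\{0,1\}^r}$ by the vectors of $\{0,1\}^r$, and define the single relation
\[
R=\bigl\{\,(\langle \bfu,\bfv\rangle)_{\bfv\in\{0,1\}^r}\colon \bfu\in\{0,1\}^r\,\bigr\}\subseteq\{0,1\}^{2^r},
\]
the set of evaluation tuples of $\GF$-linear functionals on $\{0,1\}^r$, setting $\cR=(R,\dots,R)$ with $m$ identical copies.

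The crux is the observation that $<C,\cR>$ holds iff, for every row $i$, the coordinate map $\bfv\mapsto\bfc_{\bfv}[i]$ is $\GF$-linear, which is in turn equivalent to the existence of $\bfU\in\{0,1\}^{m\times r}$ with $\bfc_{\bfv}=\bfU\bfv$ for all $\bfv$; the generator is recovered by $\bfU[:,j]=\bfc_{\bfe_j}$. Hence satisfying $\cR$ exactly characterizes the tuples whose centers form the image of some $\bfU$. For the cost equivalence I would argue both directions. Given $\bfB=\bfU\bfV$ of rank $\leq r$, the tuple $C=(\bfU\bfv)_{\bfv}$ satisfies $\cR$ and each column $\bfB[:,j]$ is one of its centers, so $\Hdist(X,C)\leq\|\bfA-\bfB\|_1$. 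Conversely, given $C$ satisfying $\cR$, I recover $\bfU$, set $\bfV[:,j]$ to a vector $\bfv$ minimizing $\hdist(\bfA[:,j],\bfc_{\bfv})$, and put $\bfB=\bfU\bfV$; then $\|\bfA-\bfB\|_1=\max_j\min_{\bfv}\hdist(\bfA[:,j],\bfc_{\bfv})=\Hdist(X,C)$.

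These two facts give $\opt(J)=\opt(\bfA,r)$, and moreover each construction turns a feasible solution of one problem of cost $d$ into a feasible solution of the other of cost at most $d$. Combining the equality of optima with this cost-monotonicity shows that an $\alpha$-approximate $C$ yields, via nearest-center rounding, a $\bfB$ with $\|\bfA-\bfB\|_1=\Hdist(X,C)\le\alpha\,\opt(J)=\alpha\,\opt(\bfA,r)$, and symmetrically an $\alpha$-approximate $\bfB$ yields $C=(\bfU\bfv)_{\bfv}$ with $\Hdist(X,C)\le\|\bfA-\bfB\|_1\le\alpha\,\opt(\bfA,r)=\alpha\,\opt(J)$. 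Thus the approximation factor is preserved in both directions.

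Finally, the running time is routine bookkeeping: building $R$ enumerates $2^r$ functionals evaluated at $2^r$ points, costing $\OO(2^{2r})$, while recording the pointers to $X$ and the $m$ copies of $R$ costs $\OO(m+n)$, matching $\OO(m+n+2^{2r})$. The two conversions are elementary—reading off $\bfU$ from the $r$ centers $\bfc_{\bfe_1},\dots,\bfc_{\bfe_r}$ (respectively, extracting a column basis of $\bfB$ by $\GF$-Gaussian elimination) together with enumerating the at most $2^r$ centers and assigning each of the $n$ columns to a nearest one—and fit within the stated $\OO(rmn)$ bound. I expect the only genuinely delicate point to be the design and verification of the relation $R$: one must check that enforcing $\GF$-linearity coordinate by coordinate is equivalent to the global subspace condition, and that the nearest-center rounding makes the two objectives \emph{coincide} (not merely bound one by the other), which is exactly what forces the approximation ratio to transfer without loss.
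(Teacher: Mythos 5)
Your proposal is correct and follows essentially the same route as the paper: your relation $R$ of evaluation tuples of $\GF$-linear functionals is exactly the paper's $R=\{(x^\intercal\lambda_1,\ldots,x^\intercal\lambda_k)\mid x\in\{0,1\}^r\}$, and your $\bfU\bfV$-factorization view corresponds to the paper's basis vectors $\bfs_1,\ldots,\bfs_r$ of the column space, with the same two-directional cost-preserving translation (centers as all linear combinations, nearest-center rounding) and the same accounting for the construction and conversion times.
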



\begin{proof}
Notice that if $\GFrank(\bfB)\leq r$, then $\bfB$ has at most $2^r$  distinct columns, because each column is a linear combination of at most $r$ vectors of a basis of the column space of $\bfB$.  Moreover, \LRAGF{} can also be stated as follows: find vectors $\bfs_1,\ldots,\bfs_r\in\{0,1\}^m$ such that 
$\max_{i\in[n]} \hdist(\bfa_i,S)$ 
is minimum, where
$\bfa_1,\ldots,\bfa_n$ are the columns of $\bfA$ and $S=\{\bfs \in \{0,1\}^m \colon \bfs~\text{is a linear combination of}~\bfs_1,\ldots,\bfs_r~\text{over}~\GF{}\}$. 

To  encode an instance of \LRAGF\ as an instance of \rcenter, 
we construct the following relation $R$. Set $k=2^r$.
Let $\Lambda=(\mathbf{\lambda}_1,\ldots,\mathbf{\lambda}_k)$ be the $k$-tuple composed of all 
distinct vectors in $\{0,1\}^r$.  Thus, each element $\lambda_i\in \Lambda$ is a binary $r$-vector. We define 
 $R=\{(x^\intercal \mathbf{\lambda}_1,\ldots,x^\intercal\mathbf{\lambda}_k)\mid x\in\{0,1\}^r\}.$  Thus, $R$ consists of $k=2^r$ $k$-tuples and every $k$-tuple in $R$ is a row of the matrix $\Lambda^\intercal \cdot \Lambda$.
 Now we define $X$ to be the set of columns of $\bfA$ and for each $i\in [m]$, $R_i=R$. Our algorithm outputs the instance $J=(X,k,\RR=(R_1,\ldots,R_m))$.

To show that the instance $(\bfA,r)$ of \LRAGF{} is equivalent to the constructed instance $J$, assume first that the vectors $\bfs_1,\ldots,\bfs_r\in \{0,1\}^m$
compose an (approximate) solution of  \LRAGF. 
For every $i\in[k]$  define the vector
$$\bfc_i=\lambda_i[1] \bfs_1\oplus\cdots\oplus\lambda_i[r] \bfs_r,$$
where $\mathbf{\lambda}_i^\intercal=(\lambda_i[1],\ldots,\lambda_i[r])$, $\oplus$ denotes the sum over \GF, and define the tuple $C=(\bfc_1,\ldots,\bfc_k)$. That is, $C$ contains all linear combinations of $\bfs_1,\ldots,\bfs_r$.
For every $i\in[k]$ and $j\in[m]$, we have that $\bfc_i[j]=(\bfs_1[j],\ldots,\bfs_r[j])\mathbf{\lambda}_i$. Therefore, $(\bfc_1[j],\ldots,\bfc_k[j])\in R$ for all $j\in[m]$. Thus, $C$ is a solution to $J$ of cost $\max_{\bfx\in X}\hdist(\bfx,C)$. 
 

For the opposite direction, assume that $C=(\bfc_1,\ldots,\bfc_k)$ is an (approximate) solution to $J$. We construct the vectors $\bfs_1,\ldots,\bfs_r$ as follows. 
Let $j\in[m]$. We have that $(\bfc_1[j],\ldots,\bfc_k[j])\in R$. Therefore, there is $\bfx\in\{0,1\}^r$ such that 
  $(\bfc_1[j],\ldots,\bfc_k[j])=(\bfx^\intercal\mathbf{\lambda}_1,\ldots,\bfx^\intercal\mathbf{\lambda}_k)$. 
  We set $\bfs_i[j]=\bfx[i]$ for $i\in[r]$. 
Observe that vectors in $C$ are linear combinations of the vectors $\bfs_1,\ldots,\bfs_r$. 
This immediately implies that for any $\alpha$-approximate solution $C$ of $J$ an $\alpha$-approximate solution $\bfB$ of $(\bfA,r)$ can be constructed in time $\OO(rmn)$.
%
%
\end{proof}

Thus, Theorem~\ref{thm:norm1PTAS} follows from Theorem~\ref{thm:kclustering} and Lemma~\ref{lem:matrixFas}.



\paragraph{Low Boolean-Rank Approximation.}
Let $\bfA$ be a binary $m\times n$ matrix. Now we consider the elements of $\bfA$ to be \emph{Boolean} variables. 
The \emph{Boolean rank} of $\bfA$ is the minimum $r$ such that $\bfA=\bfU\wedge \bfV$ for a Boolean $m\times r$ matrix $\bfU$ and a Boolean $r\times n$ matrix $\bfV$, where the product is Boolean, that is,  the logical $\wedge$ plays the role of multiplication and $\vee$ the role of sum. Here  $0\wedge 0=0$, $0 \wedge 1=0$, $1\wedge 1=1$ , $0\vee0=0$, $0\vee1=1$, and  $1\vee 1=1$.  
Thus the  matrix product is over the Boolean semi-ring $({0, 1}, \wedge, \vee)$. This can be equivalently expressed
as the  normal matrix product with addition defined as $1 + 1 =1$. Binary matrices equipped with such algebra are called \emph{Boolean
matrices}. 

In \LRAB, we are given an  $m\times n$ binary data matrix $\bfA$ and a positive integer $r$, and we seek a binary matrix $\bfB$ optimizing 
\begin{eqnarray}\label{eq_PCA_1111}
\text{ minimize } \|\bfA-\bfB\|_1  \nonumber \\ 
\text{ subject to } \rank(\bfB) \leq  r. \nonumber
\end{eqnarray}
Here, by the rank of binary matrix $\bfB$ we mean  its  Boolean rank, and norm $\| \cdot\|_1$ is   
the \emph{column sum norm}.  Similar to Lemma~\ref{lem:matrixFas}, one can prove that \LRAB is a special case of \rcenter, where $k = 2^r$. Thus, we get the following corollary from  Theorem~\ref{thm:kclustering}.

\begin{corollary}
\label{cor:booleancase}
There is an algorithm for \LRAB  that given an instance $I=(\bfA,r)$ and $0<\varepsilon<1$, runs in time 
$m^{\OO(1)}n^{\OO(2^{4r}/\varepsilon^4)}$, 
and outputs a $(1+\varepsilon)$-approximate solution with probability at least $1-2n^{-2}$. 
\end{corollary}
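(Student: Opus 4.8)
The plan is to mirror the reduction of Lemma~\ref{lem:matrixFas}, replacing the \GF-linear algebra by Boolean algebra, and then invoke Theorem~\ref{thm:kclustering} with $k=2^r$. The starting observation is the Boolean analogue of the rank bound: if $\bfB=\bfU\wedge\bfV$ with $\bfU$ an $m\times r$ and $\bfV$ an $r\times n$ Boolean matrix, then each column of $\bfB$ equals $\bigvee_{l}(\bfv[l]\wedge\bfu_l)$ for the corresponding column $\bfv$ of $\bfV$, where $\bfu_1,\ldots,\bfu_r$ are the columns of $\bfU$. Hence every column of $\bfB$ is the Boolean OR of some subset of $\{\bfu_1,\ldots,\bfu_r\}$, so $\bfB$ has at most $2^r$ distinct columns. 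Thus \LRAB is equivalent to finding $r$ vectors $\bfu_1,\ldots,\bfu_r\in\{0,1\}^m$ minimizing $\max_{i\in[n]}\hdist(\bfa_i,S)$, where $S$ is now the set of all Boolean combinations (rather than \GF-linear combinations) of $\bfu_1,\ldots,\bfu_r$.

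First I would define the single relation $R$ encoding these Boolean combinations. As in Lemma~\ref{lem:matrixFas}, let $\Lambda=(\lambda_1,\ldots,\lambda_k)$ enumerate all $k=2^r$ vectors of $\{0,1\}^r$, where each $\lambda_i$ now plays the role of the indicator of a subset of the $r$ basis vectors. I set
\[
R=\Bigl\{\,\textstyle\bigl(\bigvee_{l\in[r]}(\lambda_1[l]\wedge x[l]),\ldots,\bigvee_{l\in[r]}(\lambda_k[l]\wedge x[l])\bigr)\ \Big|\ x\in\{0,1\}^r\,\Bigr\},
\]
take $X$ to be the columns of $\bfA$, and put $R_i=R$ for all $i\in[m]$, producing an instance $J=(X,k=2^r,\cR)$ of \rcenter. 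The two directions of the equivalence then follow the template of Lemma~\ref{lem:matrixFas}: given basis vectors $\bfu_1,\ldots,\bfu_r$, the tuple $C=(\bfc_1,\ldots,\bfc_k)$ with $\bfc_i[j]=\bigvee_{l}(\lambda_i[l]\wedge\bfu_l[j])$ satisfies $\cR$ by construction; conversely, given any $C$ satisfying $\cR$, at each coordinate $j$ there is a witness $x\in\{0,1\}^r$ realizing $(\bfc_1[j],\ldots,\bfc_k[j])$, and setting $\bfu_l[j]=x[l]$ recovers basis vectors whose Boolean combinations give back $C$.

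The one point that deserves care---the step I expect to be the main (small) obstacle---is the well-definedness of the backward direction: I must check that a satisfying tuple genuinely arises from $r$ basis vectors and that the recovery is unambiguous. This is settled by noting that $\Lambda$ contains the standard basis vectors $\bfe_1,\ldots,\bfe_r$; for the index $i$ with $\lambda_i=\bfe_l$ the corresponding entry of $R$ is exactly $\bigvee_{l'}(\bfe_l[l']\wedge x[l'])=x[l]$, so the coordinate $\bfc_i[j]$ reads off $\bfu_l[j]=x[l]$ directly, making the witness $x$ (hence $\bfu_l[j]$) unique at each position and certifying that the recovered $\bfu_1,\ldots,\bfu_r$ reproduce all of $C$ coordinatewise. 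Consequently a solution $C$ to $J$ and the corresponding matrix $\bfB$ have equal cost, the reduction is cost-preserving, and an $\alpha$-approximate $C$ yields an $\alpha$-approximate $\bfB$ in time $\OO(rmn)$, exactly as in Lemma~\ref{lem:matrixFas}.

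Finally I would apply Theorem~\ref{thm:kclustering} to $J$ with error parameter $\varepsilon$ and $k=2^r$. It returns a $(1+\varepsilon)$-approximate solution to $J$ with probability at least $1-2n^{-2}$ in time $m^{\OO(1)}n^{\OO((k/\varepsilon)^4)}=m^{\OO(1)}n^{\OO(2^{4r}/\varepsilon^4)}$, and translating it back through the cost-preserving reduction gives a $(1+\varepsilon)$-approximate solution to the \LRAB instance within the same probability and time bounds, which is the claim.
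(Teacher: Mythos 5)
Your proposal is correct and follows exactly the route the paper intends: it instantiates the reduction of Lemma~\ref{lem:matrixFas} with Boolean combinations (ORs of subsets indexed by $\Lambda$) in place of \GF-linear combinations, yielding an instance of \rcenter with $k=2^r$, and then applies Theorem~\ref{thm:kclustering}. The paper only sketches this by reference to Lemma~\ref{lem:matrixFas}; your filled-in details, including the witness-recovery step via the standard basis vectors in $\Lambda$, match that argument.
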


\paragraph*{Projective $k$-center.}  
The {\sc Binary Projective $k$-Center} problem is a variation of the {\sc Binary $k$-Center} problem, where 
the centers of clusters are linear subspaces of bounded dimension $r$. (For $r=1$ this is  {\sc Binary $k$-Center} and for $k=1$ this is \LRAGF.)
Formally, in 
{\sc Binary Projective $k$-Center} we are given a set $X\subseteq \{0,1\}^m$ of $n$ vectors and  positive integers $k$ and $r$. The objective is to find a family of $r$-dimensional linear subspaces   $C=\{C_1,\ldots,C_k\} $  over \GF minimizing  \(\max_{\bfx\in X} \hdist(\bfx,\bigcup_{i=1}^kC).\)

%

To see that {\sc Binary Projective $k$-Center} is a special case of  \rcenter, we observe that the condition that $C_i$ is an $r$-dimensional subspace over \GF can be encoded (as in Lemma~\ref{lem:matrixFas}) by $2^r$ constraints.  This observation leads to the following lemma. 

\begin{lemma}\label{lem:proj} 
There is an algorithm that given an  instance $(X,r,k)$ of  {\sc Binary Projective $k$-Center}, 
runs in time $\OO(m+n+2^{\OO(rk)})$, and outputs an instance $J=(X,k'=2^{kr},\cR)$ of \rcenter\ 
with the following property. 
Given any $\alpha$-approximate solution $C$ to $J$,  an $\alpha$-approximate solution $C'$ to $(X,r,k)$ can be constructed in time $\OO(rkmn)$ and vice versa.
\end{lemma}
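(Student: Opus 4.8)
The plan is to mirror the structure of the proof of Lemma~\ref{lem:matrixFas}, since {\sc Binary Projective $k$-Center} generalizes \LRAGF\ (the latter is exactly the case $k=1$). The key observation is that each of the $k$ desired $r$-dimensional subspaces $C_i$ over \GF\ can be specified by a basis $\bfs^{(i)}_1,\ldots,\bfs^{(i)}_r\in\{0,1\}^m$, and the full set of points of $C_i$ consists of all $2^r$ linear combinations of these basis vectors. So the union $\bigcup_{i=1}^k C_i$ is described by $kr$ basis vectors and contains at most $k\cdot 2^r$ distinct vectors. I would set $k'=2^{kr}$ and encode, for each of the $k$ subspaces, the $2^r$ constraints used in Lemma~\ref{lem:matrixFas}, but now arranged so that a single tuple of $k'$ vectors simultaneously enumerates all linear combinations across all $k$ subspaces.

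First I would fix, exactly as in Lemma~\ref{lem:matrixFas}, the tuple $\Lambda=(\mathbf{\lambda}_1,\ldots,\mathbf{\lambda}_{2^r})$ of all distinct vectors in $\{0,1\}^r$, and then take $k$ independent ``copies'' of the coefficient vector, i.e.\ index the $k'=2^{kr}$ centers by $k$-fold products of choices $x^{(1)},\ldots,x^{(k)}\in\{0,1\}^r$. Concretely, I would build the relation $R$ so that a $k'$-tuple lies in $R$ if and only if its entries are exactly $\{(x^{(i)})^\intercal\mathbf{\lambda}_j\}$ ranging over all $i\in[k]$, $j\in[2^r]$, for some fixed $x^{(1)},\ldots,x^{(k)}$; this set has $2^{kr}=k'$ tuples, and $R$ can be enumerated in time $2^{\OO(rk)}$. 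As before I would set $X$ to be the columns of the input (here the given vectors), and $R_i=R$ for every $i\in[m]$, producing $J=(X,k'=2^{kr},\RR)$. The running time $\OO(m+n+2^{\OO(rk)})$ follows from the size of $R$ and the trivial construction.

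Next I would verify the two directions of the correspondence, again in parallel to Lemma~\ref{lem:matrixFas}. In the forward direction, given basis vectors $\{\bfs^{(i)}_1,\ldots,\bfs^{(i)}_r\}_{i\in[k]}$ defining the subspaces, I form the tuple $C$ whose entries are all the vectors $\lambda_j[1]\,\bfs^{(i)}_1\oplus\cdots\oplus\lambda_j[r]\,\bfs^{(i)}_r$; coordinatewise this exhibits each $k'$-tuple $(\bfc_1[j],\ldots,\bfc_{k'}[j])$ as an element of $R$, so $C$ satisfies $\RR$, and since the entries of $C$ are precisely the points of $\bigcup_i C_i$, the cost $\Hdist(X,C)$ equals the projective clustering cost. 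In the reverse direction, from a tuple $C$ satisfying $\RR$ I read off, coordinate by coordinate, the witnessing $x^{(1)},\ldots,x^{(k)}$, recover the basis entries $\bfs^{(i)}_t[j]=x^{(i)}[t]$, and check that the $\GF$-span of each basis is an $r$-dimensional subspace whose union contains the vectors of $C$; this reconstruction runs in time $\OO(rkmn)$. Because the objective values coincide exactly, any $\alpha$-approximate solution maps to an $\alpha$-approximate solution in either direction.

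The main obstacle I anticipate is purely bookkeeping in the coordinatewise encoding: I must be careful that the \emph{same} choice of $x^{(i)}$ is used consistently across all $2^r$ entries belonging to subspace $i$ within a single tuple of $R$, so that the relation genuinely forces each block of the center tuple to range over a linear subspace rather than an arbitrary set of $2^r$ vectors. This is exactly the role played by defining $R$ as the set of rows of $\Lambda^\intercal\Lambda$ in Lemma~\ref{lem:matrixFas}, lifted to $k$ blocks; getting the indexing of the $k'$ coordinates against the $k$ subspaces right is the only delicate point. Once the relation is set up correctly, the equivalence of objective values is immediate, and the lemma follows; consequently one obtains a PTAS for {\sc Binary Projective $k$-Center} via Theorem~\ref{thm:kclustering} with $k'=2^{kr}$, in the same manner as Corollary~\ref{cor:booleancase}.
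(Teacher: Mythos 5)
Your overall plan is the one the paper intends (the paper gives no detailed proof of this lemma; it only remarks that each subspace is encoded by $2^r$ constraints exactly as in Lemma~\ref{lem:matrixFas}), and the point you flag as delicate---that a single witness $x^{(i)}$ must serve all entries of block $i$ within one tuple of $R$---is indeed the heart of the encoding. However, as written your bookkeeping does not typecheck. The $k$-fold choices $(x^{(1)},\ldots,x^{(k)})\in(\{0,1\}^r)^k$ are not an index set for the centers; they parametrize the tuples of the relation $R$ (they play the role that the row $x=(\bfs_1[j],\ldots,\bfs_r[j])$ of basis values plays in Lemma~\ref{lem:matrixFas}). The centers should be indexed by pairs $(i,\lambda)$ with $i\in[k]$ and $\lambda\in\{0,1\}^r$, i.e.\ by the subspace and the coefficient vector of the linear combination, so the natural instance has $k\cdot 2^r$ centers, and $R$ is the set of the (at most) $2^{kr}$ tuples $\left((x^{(i)})^\intercal\lambda\right)_{(i,\lambda)\in[k]\times\{0,1\}^r}$. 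In your construction a tuple of $R$ is declared to have arity $k'=2^{kr}$ while its entries are indexed by $[k]\times[2^r]$, and in the forward direction your tuple $C$ likewise has only $k\cdot 2^r$ entries; so the relation and the solution you build are not of the arity of the instance you defined, and the coincidence $|R|=k'$ from Lemma~\ref{lem:matrixFas} (where both happen to equal $2^r$) does not persist and is not needed.

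The repair is routine: either index the centers by $[k]\times\{0,1\}^r$ and note $k2^r\le 2^{kr}$, so Theorem~\ref{thm:kclustering} still yields the running time claimed in the corollary that follows the lemma (the $2^{kr}$ in the statement is just the bound the authors carry), or pad every tuple of $R$ with repeated entries so that the instance has exactly $k'=2^{kr}$ centers. Once the indexing is fixed, your two directions are exactly the blockwise version of Lemma~\ref{lem:matrixFas}: the entries of $C$ are, as a set, precisely the points of $\bigcup_{i=1}^k C_i$, so the objective values coincide and $\alpha$-approximate solutions transfer in both directions in time $\OO(rkmn)$ (with the same caveat already implicit in Lemma~\ref{lem:matrixFas} that the recovered basis may be linearly dependent, so ``dimension at most $r$'' is what is really guaranteed).
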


Combining Theorem~\ref{thm:kclustering} and Lemma~\ref{lem:proj} together, we get the following corollary. 

\begin{corollary}
\label{cor:booleancase}
There is an algorithm for {\sc Binary Projective $k$-Center} that given an instance $I=(X,r,k)$ and $0<\varepsilon<1$, where $X\subseteq \{0,1\}^m$ is a set of $n$ vectors and $r,k\in {\mathbb N}$, runs in time 
$m^{\OO(1)}n^{\OO(2^{4kr}/\varepsilon^4)}$, 
and outputs a $(1+\varepsilon)$-approximate solution with probability at least $1-2n^{-2}$. 
\end{corollary}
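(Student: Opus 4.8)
The plan is to obtain the corollary as a direct composition of the reduction in Lemma~\ref{lem:proj} with the approximation scheme for \rcenter\ from Theorem~\ref{thm:kclustering}. First I would run the algorithm of Lemma~\ref{lem:proj} on the input instance $I=(X,r,k)$; in time $\OO(m+n+2^{\OO(rk)})$ this produces an equivalent instance $J=(X,k',\cR)$ of \rcenter\ with $k'=2^{kr}$, together with the guarantee that any $\alpha$-approximate solution to $J$ can be converted into an $\alpha$-approximate solution to $I$ (and vice versa) in time $\OO(rkmn)$.

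Next I would invoke Theorem~\ref{thm:kclustering} on $J$ with the same error parameter $\varepsilon$. With probability at least $1-2n^{-2}$ this returns a tuple $C$ that is a $(1+\varepsilon)$-approximate solution to $J$, in time $m^{\OO(1)}n^{\OO((k'/\varepsilon)^4)}$. Substituting $k'=2^{kr}$ yields $(k'/\varepsilon)^4 = 2^{4kr}/\varepsilon^4$, so this step runs in time $m^{\OO(1)}n^{\OO(2^{4kr}/\varepsilon^4)}$, precisely the bound claimed in the corollary. I would then apply the backward direction of Lemma~\ref{lem:proj} to transform $C$ into a family $C'$ of $r$-dimensional subspaces over \GF that is a $(1+\varepsilon)$-approximate solution to $I$, at an additional cost of only $\OO(rkmn)$.

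For the correctness and the stated guarantees, the key observations are that the reduction of Lemma~\ref{lem:proj} preserves the approximation ratio exactly (an $\alpha$-approximation maps to an $\alpha$-approximation) and is deterministic. Hence the only randomness, and therefore the only source of failure, is the call to Theorem~\ref{thm:kclustering}, so the overall success probability remains $1-2n^{-2}$. For the running time, the two reduction steps contribute $\OO(m+n+2^{\OO(rk)})$ and $\OO(rkmn)$, both of which are dominated by the term $m^{\OO(1)}n^{\OO(2^{4kr}/\varepsilon^4)}$ coming from Theorem~\ref{thm:kclustering}, giving the claimed total.

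There is no genuine obstacle here: all the substance resides in Theorem~\ref{thm:kclustering} and in the reduction of Lemma~\ref{lem:proj}, both of which I may assume. The only points requiring care are purely a matter of bookkeeping, namely verifying the exponent arithmetic $k'=2^{kr}\Rightarrow (k'/\varepsilon)^4 = 2^{4kr}/\varepsilon^4$ and confirming that the additive reduction costs are absorbed into the dominant term. This is exactly the analogue, for {\sc Binary Projective $k$-Center}, of the way Theorem~\ref{thm:norm1PTAS} is deduced from Theorem~\ref{thm:kclustering} together with Lemma~\ref{lem:matrixFas}, so the same template applies verbatim.
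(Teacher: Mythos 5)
Your proposal is correct and matches the paper's approach exactly: the paper derives this corollary precisely by composing the reduction of Lemma~\ref{lem:proj} (with $k'=2^{kr}$) with the PTAS of Theorem~\ref{thm:kclustering}, and your exponent arithmetic and probability accounting are both right. No gaps.
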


\bibliographystyle{siam}
\bibliography{book_pc,pca_with_outliers,k-clustering}

\end{document}